\documentclass{article}

\usepackage[english]{babel}

\usepackage{arxiv}

\usepackage{amssymb}
\usepackage{times}
\usepackage{cite}

\usepackage{amsthm}
\usepackage{amsxtra}
\usepackage{amsmath}

\usepackage{verbatim}
\usepackage{epsfig}
\usepackage{setspace}
\usepackage{caption}
\usepackage{breqn}
\usepackage{graphicx}
\usepackage{tikz}
\usetikzlibrary{positioning}
\usepackage[colorlinks=true, allcolors=blue]{hyperref}


\newtheorem{theorem}{Theorem}
\newtheorem{definition}{Definition}
\newtheorem{lemma}[theorem]{Lemma}
\newtheorem{proposition}[theorem]{Proposition}
\newtheorem{corollary}[theorem]{Corollary}

\newtheorem{conjecture}{Conjecture}
\newtheorem{problem}{Problem}
\newtheorem{remark}{Remark}

\title{Equidistant Linear Codes in Projective Spaces\thanks{Content of this paper was partially covered in the doctoral thesis of the author at Indian Institute of Science, Bangalore.}}
\author{ \href{https://orcid.org/0000-0002-9110-7617}{\hspace{1mm}Pranab Basu}\\
	R. C. Bose Center for Cryptology and Security\\
	Indian Statistical Institute\\
	Kolkata 700108 \\
	\texttt{pranabbasu@alum.iisc.ac.in} \\}
\date{}


\begin{document}
\maketitle

\begin{abstract}
Linear codes in the projective space $\mathbb{P}_q(n)$, the set of all subspaces of the vector space $\mathbb{F}_q^n$, were first considered by Braun, Etzion and Vardy. The Grassmannian $\mathbb{G}_q(n,k)$ is the collection of all subspaces of dimension $k$ in $\mathbb{P}_q(n)$. We study equidistant linear codes in $\mathbb{P}_q(n)$ in this paper and establish that the normalized minimum distance of a linear code is maximum if and only if it is equidistant. We prove that the upper bound on the size of such class of linear codes is $2^n$ when $q=2$ as conjectured by Braun et al. Moreover, the codes attaining this bound are shown to have structures akin to combinatorial objects, viz. \emph{Fano plane} and \emph{sunflower}. We also prove the existence of equidistant linear codes in $\mathbb{P}_q(n)$ for any prime power $q$ using \emph{Steiner triple system}. Thus we establish that the problem of finding equidistant linear codes of maximum size in $\mathbb{P}_q(n)$ with constant distance $2d$ is equivalent to the problem of finding the largest $d$-intersecting family of subspaces in $\mathbb{G}_q(n, 2d)$ for all $1 \le d \le \lfloor \frac{n}{2}\rfloor$. Our discovery proves that there exist equidistant linear codes of size more than $2^n$ for every prime power $q > 2$.
\end{abstract}

\keywords{Subspace codes \and Symmetric designs \and Intersecting families \and Linear codes \and Equidistant codes \and Steiner Triple System}
\setlength{\parindent}{10ex}
\section{Introduction}
\label{intro}

The unique finite field with $q$ elements is denoted by $\mathbb{F}_q$; $q$ is necessarily a prime power \cite{MS}. The \emph{projective space} $\mathbb{P}_q(n)$ is defined as the set of all subspaces of $\mathbb{F}_q^n$, the vector space of dimension $n$ over $\mathbb{F}_q$ \footnote{This terminology is not standard. In some branches of mathematics such as \emph{projective geometry}, the term projective space means the collection of all lines passing through the origin of a vector space.}. In terms of notation, $\mathbb{P}_q(n) := \{V: V \le \mathbb{F}_q^n\}$, where $\le$ signifies the usual vector space inclusion. The collection of all $k$-dimensional subspaces of $\mathbb{F}_q^n$ is known as the \emph{Grassmannian} of dimension $k$, denoted by $\mathbb{G}_q(n, k)$, for all nonnegative integers $k \le n$. In terms of notation we can write the following:
\begin{equation*}
    \mathbb{G}_q(n, k) := \{U \in \mathbb{P}_q(n): \dim U = k\}; \qquad \mathbb{P}_q(n) = \bigcup\limits_{i=0}^{n} \mathbb{G}_q(n, i).
\end{equation*}
The $q$-analog of the binomial coefficient is the $q$-ary \emph{Gaussian coefficient}, defined for all nonnegative integers $n$ and $k \le n$ as
\begin{equation*}
    {n \brack k}_q := |\mathbb{G}_q(n, k)| = \prod_{i=0}^{k-1} \frac{q^{n-i}-1}{q^{k-i}-1},
\end{equation*}
which expresses the total number of $k$-dimensional subspaces of $\mathbb{F}_q^n$. For all $X, Y \in \mathbb{P}_q(n)$, the \emph{subspace distance}, defined as
\begin{eqnarray*}
d_S (X, Y) &:=& \dim (X+Y) - \dim (X \cap Y) \\
&=& \dim X + \dim Y - 2\dim (X \cap Y),
\end{eqnarray*}
is a metric for $\mathbb{P}_q(n)$ \cite{AAK, KK}; where $X+Y$ is the smallest subspace that contains both $X$ and $Y$. Both $\mathbb{P}_q(n)$ and $\mathbb{G}_q(n, k)$ are turned into metric spaces under this metric.

A subset $\mathbb{C} \subseteq \mathbb{P}_q(n)$ of the projective space $\mathbb{P}_q(n)$ is called an $(n, M, d)$ \emph{code} in $\mathbb{P}_q(n)$ if the number of subspaces in $\mathbb{C}$ is $|\mathbb{C}| = M$ and $d_S (X, Y) \ge d$ for all distinct $X, Y \in \mathbb{C}$. The parameters $n, M$ and $d$ are the \emph{length}, \emph{size} and \emph{minimum distance} of $\mathbb{C}$, respectively. A code is \emph{equidistant} if any pair of codewords are at a fixed distance apart. The \emph{rate} $R$ and \emph{normalized minimum distance} $\delta$ of an $(n, M, d)$ code $\mathbb{C}$ are defined as follows.
\begin{equation*}
    l := \max\limits_{X \in \mathbb{C}} \dim X; \qquad R := \frac{\log_q {M}}{nl}; \qquad \delta := \frac{d}{2l}.
\end{equation*}
By definition both the rate and the normalized minimum distance of a code are limited to the range $[0, 1]$.

A code in a projective space is generally known as a \emph{subspace code}. A subspace code is called a \emph{constant dimension code} if all its codewords have a fixed nonzero dimension, i.e. $\mathbb{C} \subseteq \mathbb{P}_q(n)$ is a constant dimension code if $\mathbb{C} \subseteq \mathbb{G}_q(n, k)$ for some $k \le n$. For transmission of information in volatile networks, the technique of \emph{random network coding} is employed \cite{HKMKE, HMKKESL} where the intermediary nodes prefer coding of received inputs by taking random $q$-linear combinations instead of simple routing. In that context, the authors of \cite{KK} showed that subspace codes can tackle the problem of errors and erasures introduced anywhere in a network with varying topology. This motivated a vast amount of research in codes in projective spaces \cite{EV, SE, SE2, HKK, GY, XF, TMBR, KoK}.

A partially ordered set is called a \emph{lattice} if a least upper bound and a greatest lower bound of any pair of elements exist within the set. The lattice associated to $\mathbb{P}_q(n)$ is the \emph{projective lattice}, denoted by $(\mathbb{P}_q(n), +, \cap, \le)$. Within a framework of lattices, codes in projective spaces are the $q$-analog of binary block codes in Hamming spaces \cite{BEV}. While linear codes in Hamming spaces have been extensively studied due to their application in error correction \cite{MS, Ru}, not much attention was given to their counterparts in projective spaces until Braun et al. introduced the notion of linearity in $\mathbb{P}_q(n)$ \cite{BEV}. An $[n, k]$ binary linear block code in $\mathbb{F}_2^n$ is a subspace of dimension $k$. However, the extension of this definition to projective spaces is not straightforward. The subspace distance is not \emph{translation invariant} over entire $\mathbb{P}_q(n)$ as $\mathbb{P}_q(n)$ is not a vector space with respect to the usual vector space addition. This problem was tackled in \cite{BEV} by imposing a structure akin to a vector space to a subset of $\mathbb{P}_q(n)$. They conjectured the following about the size of such a subset wherein linearity can be defined.
\begin{conjecture}
\label{C}
The maximum size of a linear code in $\mathbb{P}_2(n)$ is $2^n$.
\end{conjecture}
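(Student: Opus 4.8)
The plan is to pin down what linearity forces on $\mathbb{C}$ as an algebraic object and then bound its size. Write $\oplus$ for the group operation supplied by the linearity axioms of \cite{BEV}, and recall that these axioms make $(\mathbb{C}, \oplus)$ an abelian group with identity the zero subspace $\{0\}$ and satisfy $\dim(X \oplus Y) = d_S(X, Y)$ for all $X, Y \in \mathbb{C}$. The first observation is that $\dim(X \oplus X) = d_S(X, X) = 0$, so $X \oplus X = \{0\}$ for every codeword; hence every nonidentity element has order two and $(\mathbb{C}, \oplus)$ is an elementary abelian $2$-group. Consequently $\mathbb{C}$ is an $\mathbb{F}_2$-vector space and $|\mathbb{C}| = 2^m$ for some integer $m \ge 0$. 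The conjecture is therefore equivalent to the single inequality $m \le n$, and the reverse inequality (a code meeting the bound) I would dispatch at once by exhibiting the coordinate-subspace code: fixing a basis $e_1, \dots, e_n$ of $\mathbb{F}_2^n$ and setting $U_S := \langle e_i : i \in S\rangle$ for $S \subseteq \{1, \dots, n\}$, the family $\{U_S\}_S$ with $U_S \oplus U_T := U_{S \triangle T}$ has $2^n$ members and satisfies $\dim(U_S \oplus U_T) = |S \triangle T| = d_S(U_S, U_T)$, so it is linear of size $2^n$.

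For the upper bound I would first record the translation invariance $d_S(X \oplus Z, Y \oplus Z) = \dim((X\oplus Z)\oplus(Y\oplus Z)) = \dim(X\oplus Y) = d_S(X,Y)$, which says that $\dim$ behaves on $(\mathbb{C}, \oplus) \cong \mathbb{F}_2^m$ exactly like a weight function $wt(X) := \dim X$ with associated translation-invariant metric $d_S$. A useful consequence, obtained by reducing $\dim(X\oplus Y) = \dim X + \dim Y - 2\dim(X\cap Y)$ modulo $2$, is that $X \mapsto \dim X \bmod 2$ is a group homomorphism $\mathbb{C} \to \mathbb{F}_2$; this already constrains the weight distribution and feeds the counting step below.

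The heart of the argument is to prove $m \le n$, and here I see two complementary routes. The first is a realization/rank route: let $W := \sum_{X \in \mathbb{C}} X$ be the join of all codewords, so $\dim W \le n$, and attempt to produce an injective $\mathbb{F}_2$-linear map $\mathbb{C} \hookrightarrow W$, which would give $m \le \dim W \le n$. The natural candidates are the functionals $X \mapsto \dim(X \cap H) \bmod 2$ indexed by hyperplanes $H$ of $W$, and the task is to select $\dim W$ of them that are $\oplus$-additive and jointly separate codewords. The second is a double-counting route tailored to $q = 2$: for each nonzero vector $p \in \mathbb{F}_2^n$ set $a_p := \#\{X \in \mathbb{C} : p \in X\}$ and play $\sum_p a_p = \sum_{X} (2^{\dim X} - 1)$ against $\sum_p \binom{a_p}{2} = \sum_{\{X,Y\}} (2^{\dim(X\cap Y)} - 1)$, using the weight-distribution constraint from the previous paragraph together with $d_S(X,Y) = \dim X + \dim Y - 2\dim(X\cap Y)$ to bound $2^m$. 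I expect the cleanest treatment of the equality case to route through the paper's own equivalence between maximum normalized distance and equidistance: the size-$2^n$ codes should be exactly the equidistant ones, whose intersection pattern is that of a symmetric design, which is what ultimately explains the appearance of the \emph{Fano plane} and the \emph{sunflower}.

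The step I expect to be the main obstacle is precisely the realization $m \le n$. The operation $\oplus$ is only an abstract group law and is \emph{not} induced by any set operation on the underlying point sets — already for the coordinate-subspace code, $U_S \oplus U_T = U_{S\triangle T}$ differs as a set from the symmetric difference $U_S \triangle U_T$ — so one cannot simply push the group structure into $\mathbb{F}_2^n$ by indicator vectors. Overcoming this requires genuinely exploiting the compatibility $\dim(X\oplus Y) = d_S(X,Y)$ to manufacture enough $\oplus$-additive $\mathbb{F}_2$-valued functionals from the ambient geometry of $\mathbb{F}_2^n$; controlling how the abstract additive structure is forced to interact with honest subspace dimensions is where the real work lies, and it is also the point at which I would expect the hypothesis $q = 2$ to be essential, consistent with the abstract's assertion that the bound fails for every $q > 2$.
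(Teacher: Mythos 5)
There is a genuine gap, and it is worth being clear about its nature: the statement you were given is a \emph{conjecture}, and the paper itself never proves it in this generality --- it proves the bound $|\mathcal{U}| \le 2^n$ only under the additional hypothesis that the linear code is \emph{equidistant} (Theorem~\ref{T8}), by observing that $\mathcal{S} := \{X \setminus \{\mathbf{0}\} : X \in \mathcal{U}^{*}\}$ is then a $(2^d-1)$-intersecting family of subsets of the $(2^n-1)$-element set $\mathbb{F}_2^n \setminus \{\mathbf{0}\}$ and invoking Fisher's inequality (Theorem~\ref{T4}), with the equality case forced into a symmetric design and hence $n=3$. Your preliminary observations are all correct: axiom (iii) (equivalently $\dim(X \boxplus X) = d_S(X,X) = 0$) makes $(\mathcal{U}, \boxplus)$ an elementary abelian $2$-group, so $|\mathcal{U}| = 2^m$; your coordinate-subspace code is indeed a linear code of size $2^n$ (it is the ``code derived from a fixed basis'' of \cite{BEV}); and $X \mapsto \dim X \bmod 2$ is a homomorphism to $\mathbb{F}_2$. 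But the heart of the claim, $m \le n$ for an \emph{arbitrary} linear code, is exactly the open problem, and neither of your two routes closes it. The double-counting route founders because Fisher's inequality is genuinely a statement about $\lambda$-intersecting families: when the pairwise intersection dimensions $\dim(X \cap Y)$ are allowed to vary, the inequality $|\mathcal{F}| \le |\mathcal{X}|$ for set systems is simply false (all subsets of $\mathcal{X}$ form a family of size $2^{|\mathcal{X}|}$), so some structural input beyond your two counting identities is indispensable. The realization route needs the functionals $X \mapsto \dim(X \cap H) \bmod 2$ to be $\boxplus$-additive and jointly separating, but nothing in Definition~\ref{L} or in Lemmas~\ref{L1}--\ref{L7} makes an abstract group law interact with intersections against ambient hyperplanes in this way; as you yourself note, $\boxplus$ is not induced by any set operation, and no construction of such functionals is given.

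Your proposed treatment of the equality case is also contradicted by your own construction. The coordinate-subspace code has size $2^n$ yet contains pairs at subspace distance $1$ (e.g.\ $U_{\{1\}}$ and $U_{\emptyset} = \{0\}$), so it is very far from equidistant; hence size-$2^n$ linear codes are \emph{not} ``exactly the equidistant ones.'' The paper's Theorem~\ref{T8} shows the opposite refinement: within the equidistant subclass the bound $2^n$ is attained only for $n=3$ (the Fano plane), while for $n \ge 4$ the extremal equidistant codes have size $2^{n-1}$ (sunflowers, Proposition~\ref{P2}). In short, the correct relationship is that equidistance is a strong extra hypothesis under which the conjecture can be proved by design-theoretic tools, not a property one can expect to extract from maximality of size.
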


The extension of Conjecture~\ref{C} to linear codes in $\mathbb{P}_q(n)$ for any prime power $q$ has been proved for special cases in \cite{PS, BK2}. The authors of \cite{BK} showed that a linear code in $\mathbb{P}_q(n)$ can be as large as $2^n$ if it is a sublattice of the corresponding projective lattice; furthermore, the maximal code has a normalized minimum distance of $\frac{1}{2n}$, the lowest among all $(n, M, d)$ codes. Thus, it is important to look beyond lattice scheme for linear subspace codes that can achieve higher rate with higher normalized minimum distance.

Coding in projective spaces is tricky as $\mathbb{P}_q(n)$ is not \emph{distance-regular}; therefore, standard geometric intuitions often do not hold. This problem is overcome for constant dimension codes as the underlying coding space $\mathbb{G}_q(n, k)$ is distance-regular. In this context, study of equidistant codes in Grassmannians have been motivated as they exhibit very unique distance-distribution \cite{ER, BP, GR, BSSV}. An equidistant code in $\mathbb{G}_q(n, k)$ is necessarily a $\lambda$-intersecting code, i.e. any two codewords always intersect in a subspace of some fixed dimension $\lambda < k$. The special case when all pairwise intersections coincide is known as a \emph{sunflower}.

Equidistant constant dimension codes are the $q$-analog of equidistant constant-weight codes in Hamming spaces. In classical error correction, the maximum size of an equidistant code in $\mathbb{F}_q^n$ is closely related to that of an equidistant constant-weight code in $\mathbb{F}_q^n$ with same constant distance and constant weight \cite{FKLW}. We use the notation $\mathbb{B}_q(n, d)$ and $\mathbb{B}_q(n, d, w)$ to denote the maximum size of an equidistant code with constant distance $d$ and that of an equidistant constant-weight code with constant distance $d$ and constant weight of $w$, respectively.
\begin{theorem}[\cite{FKLW}, Theorem~1]
    \label{A}
    \begin{equation*}
    \mathbb{B}_q(n, d) = \mathbb{B}_q(n, d, d) + 1.
    \end{equation*}
\end{theorem}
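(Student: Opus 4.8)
The plan is to prove the two inequalities $\mathbb{B}_q(n, d) \ge \mathbb{B}_q(n, d, d) + 1$ and $\mathbb{B}_q(n, d) \le \mathbb{B}_q(n, d, d) + 1$ separately, the entire argument resting on a single elementary fact: the Hamming distance on $\mathbb{F}_q^n$ is translation invariant, so for every fixed $v \in \mathbb{F}_q^n$ the map $x \mapsto x - v$ preserves all pairwise distances, while the weight of a vector is simply its distance from the zero vector $\mathbf{0}$.

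For the lower bound I would begin with an optimal equidistant constant-weight code $C \subseteq \mathbb{F}_q^n$ realizing $|C| = \mathbb{B}_q(n, d, d)$, so that every codeword has weight exactly $d$ and any two distinct codewords are at distance $d$. Because each codeword has weight $d$, its distance from $\mathbf{0}$ equals $d$; hence adjoining $\mathbf{0}$ produces the code $C \cup \{\mathbf{0}\}$, which remains equidistant with constant distance $d$ and has size $\mathbb{B}_q(n, d, d) + 1$. This yields $\mathbb{B}_q(n, d) \ge \mathbb{B}_q(n, d, d) + 1$.

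For the upper bound I would take an optimal equidistant code $D$ with constant distance $d$, so $|D| = \mathbb{B}_q(n, d)$, and fix an arbitrary codeword $c_0 \in D$. Translating by $-c_0$ gives $D' := \{\, c - c_0 : c \in D \,\}$, which by translation invariance is again equidistant with distance $d$ and contains $\mathbf{0}$. Every nonzero word of $D'$ lies at distance $d$ from $\mathbf{0}$, so it has weight exactly $d$; therefore $D' \setminus \{\mathbf{0}\}$ is an equidistant constant-weight code with constant distance $d$ and constant weight $d$, of size $|D| - 1$. Consequently $\mathbb{B}_q(n, d) - 1 \le \mathbb{B}_q(n, d, d)$, and combining the two bounds establishes the claimed identity.

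I expect the argument to be essentially routine once translation invariance is invoked; there is no serious obstacle to overcome. The only point deserving a little care is verifying that, after translating $D$ so that $c_0$ becomes $\mathbf{0}$, the remaining words not only keep their mutual distance $d$ but simultaneously all acquire weight exactly $d$. This is precisely what forces the single removed word $\mathbf{0}$ to account for the ``$+1$'' gap between the two quantities. Since each construction is reversible and loses exactly one codeword in each direction, neither inequality can be improved, and equality follows.
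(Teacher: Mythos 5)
Your proof is correct. The paper itself offers no proof of this statement---it is imported verbatim from \cite{FKLW}---and your two-sided argument (adjoining $\mathbf{0}$ to an optimal equidistant constant-weight code for the lower bound, and translating an optimal equidistant code so that a chosen codeword becomes $\mathbf{0}$ for the upper bound) is exactly the standard proof of that result, resting, as you say, only on the translation invariance of the Hamming metric and the identity $\mathrm{wt}(x) = d_H(x, \mathbf{0})$.
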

An equivalent expression for equidistant constant dimension codes may exist. In the sequel, we will use the term `linear code' to imply linear subspace code unless mentioned otherwise.

In this paper, we investigate equidistant linear codes in projective spaces. In particular, we show that the nontrivial part of an equidistant linear code in $\mathbb{P}_q(n)$ is basically an equidistant code in $\mathbb{G}_q(n, 2d)$ with constant distance of $2d$ for some $d \le{\lfloor \frac{n}{2}\rfloor}$. This helps us to bring out the $\lambda$-intersecting structure in such class of linear codes. As a result of this characterization, we prove the conjectured bound (Conjecture~\ref{C}) for equidistant linear codes in $\mathbb{P}_2(n)$ and construct maximal codes that resemble well known combinatorial objects like Fano plane and sunflowers. We also show that constructing equidistant linear codes in a projective space is equivalent to finding certain intersecting families in the Grassmannians contained therein.

The rest of the paper is organized as follows. In Section~\ref{sec:1} we give the formal definition of a linear code in a projective space and some pertinent definitions of certain combinatorial objects. A few useful results are also listed. Equidistant linear codes are formally defined and discussed in Section~\ref{sec:4}, where we prove that the nontrivial part of an equidistant linear code in $\mathbb{P}_q(n)$ with constant distance $2d$ is a $d$-intersecting code in $\mathbb{G}_q(n, 2d)$. A linear code is shown to attain the maximum normalized minimum distance if and only if it is equidistant. In Section~\ref{sec:5} we prove the upper bound conjectured by Braun et al. on the size of a linear code in $\mathbb{P}_2(n)$ (Conjecture~\ref{C}) when the code is equidistant. The bound is shown to be achievable only for the case $n=3$ using results from design theory. We present constructions for the maximal codes for all $n \ge 3$; the maximal codes turn out to be either a Fano plane or sunflowers. Equidistant linear codes in $\mathbb{P}_q(n)$ for any prime power $q$ are considered in Section~\ref{sec:6}, where using Steiner triple systems we establish that existence of an equidistant linear code in $\mathbb{P}_q(n)$ with constant distance $2d$ is guaranteed if and only if there exists a $d$-intersecting family in $\mathbb{G}_q(n, 2d)$ of similar size. Thereby we prove that translation invariance on the subspace distance metric is decoupled by the property of constant distance in linear codes in projective spaces. Finally, equidistant linear codes of size greater than $2^n$ in $\mathbb{P}_q(n)$ are shown to exist for all prime powers $q >2$. We discuss a few interesting open problems for future research in Section~\ref{sec:7}.
\paragraph{Notation.} The finite vector space of dimension $n$ over $\mathbb{F}_q$ is denoted by $\mathbb{F}_q^n$. $\mathbb{P}_q(n)$ denotes the set of all subspaces of $\mathbb{F}_q^n$. We denote the canonical $m$-set as $[m] := \{1, \ldots, m\}$. $\mathcal{P}(\mathcal{X})$ and $\binom{\mathcal{X}}{k}$ represent the collection of all subsets and that of all $k$-subsets of a finite set $\mathcal{X}$, respectively. A Steiner triple system defined on a set of $v$ points will be written as $STS(v)$. The maximum size of an equidistant constant dimension code in $\mathbb{G}_q(n, k)$ with constant distance $d$ will be expressed as $\mathcal{A}_q(n, d, k)$. $\mathcal{U}^{*}$ will stand for the nontrivial part of a linear code $\mathcal{U}$, i.e. $\mathcal{U}^{*} := \mathcal{U} \backslash \{\{0\}\}$. The subspace spanned by a set of vectors $\{v_1, \ldots, v_i\}$ will be denoted by $\langle v_1, \ldots, v_i\rangle$. The all-zero vector will be denoted as $\mathbf{0} := (0, \ldots, 0)$. For any real number $x$, the greatest integer less than $x$ is represented by $\lfloor x\rfloor$.

\section{Definitions and Relevant Background}
\label{sec:1}
In this section we will go through some basic definitions as well as a few well known results that can be found in the literature. The notation and terminology used here are standard.
\subsection{Linear Codes in Projective Spaces}
\label{sec:2}
A linear code in the projective space $\mathbb{P}_q(n)$ is a subset of $\mathbb{P}_q(n)$ containing $\{0\}$ that satisfies certain criteria as stated below.
\begin{definition}
	\label{L}
	A subset $\mathcal{U}  \subseteq \mathbb{P}_q(n)$, with $\left\{0\right\} \in \mathcal{U}$, is a \emph{linear code} if there exists an addition operation $\boxplus : \mathcal{U} \times \mathcal{U} \rightarrow \mathcal{U}$ such that: 
	\begin{itemize}
		\item [(i)] $(\mathcal{U}, \boxplus)$ is an abelian group; 
		\item [(ii)] the identity element of the group $(\mathcal{U}, \boxplus)$ is $\left\{0\right\}$; 
		\item [(iii)] $X \boxplus X = \left\{0\right\}$ for every group element $X \in \mathcal{U}$; 
		\item [(iv)] the function $\boxplus$ is isometric, i.e., $d_S(Y_1 \boxplus X, Y_2 \boxplus X) = d_S(Y_1, Y_2)$ for all $Y_1, Y_2, X \in \mathcal{U}.$
	\end{itemize}
\end{definition}

Braun et. al. introduced the above definition to define linear codes in projective spaces when the underlying field is binary, i.e. when $q = 2$ \cite{BEV}. Observe that $0.X = \{0\}$ and $1.X = X$ for all $X \in \mathcal{U}$, i.e., the scalar multiplication is distributive over $\boxplus$ in $\mathbb{F}_2$. Definition \ref{L} was later generalized to any prime power $q$ while keeping the self-inversion condition (Definition \ref{L}(iii)) intact \cite{PS, BK2, BK}, thus guaranteeing distributivity for the scalar multiplication. The consequence is that any linear code $\mathcal{U}$ in $\mathbb{P}_q(n)$ is a vector space with respect to the corresponding addition operation $\boxplus$ over $\mathbb{F}_2$ \cite{BEV, BTh}.

Linearity imposes certain geometric restrictions on any linear code in $\mathbb{P}_q(n)$, some of which are encapsulated in the following three lemmas \cite{BEV}. We record them without the proofs.
\begin{lemma}(\cite{BEV}, Lemma~6)
	\label{L1}
	Let $\mathcal{U}$ be a linear code in $\mathbb{P}_q(n)$ and let $\boxplus$ be the isometric linear addition on $\mathcal{U}$. Then for all $X, Y \in \mathcal{U}$, we have:
	\begin{equation*}
	\dim(X \boxplus Y) = d_S(X, Y) = \dim X + \dim Y - 2\dim(X \cap Y)
	\end{equation*}
	In particular, if $X \subseteq Y$, then $\dim(X \boxplus Y) = \dim Y - \dim X$.
\end{lemma}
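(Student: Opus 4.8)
The plan is to exploit the isometry axiom (Definition~\ref{L}(iv)) together with the self-inversion property (Definition~\ref{L}(iii)) to collapse the two-codeword distance $d_S(X,Y)$ into a single-codeword quantity that is manifestly a dimension. The second equality in the statement carries no content: it is exactly the defining formula for the subspace distance given in Section~\ref{intro}. The only thing requiring argument is the first equality $\dim(X \boxplus Y) = d_S(X,Y)$, and the special case then drops out by a one-line substitution.

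First I would record the elementary preliminary that for any subspace $Z \in \mathbb{P}_q(n)$ one has $d_S(Z, \{0\}) = \dim Z$. This is immediate from the definition of $d_S$, since $\dim(Z + \{0\}) = \dim Z$ while $\dim(Z \cap \{0\}) = 0$. The central step is then to invoke the isometry condition with the common translate taken to be $Y$ and the two arguments taken to be $X$ and $Y$, which yields
\[
d_S(X \boxplus Y,\, Y \boxplus Y) = d_S(X, Y).
\]
By the self-inversion property $Y \boxplus Y = \{0\}$, so the left-hand side is $d_S(X \boxplus Y, \{0\})$, which by the preliminary observation equals $\dim(X \boxplus Y)$. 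Combining these gives $\dim(X \boxplus Y) = d_S(X,Y)$, and substituting the definition of the subspace distance supplies the middle expression in the claimed chain of equalities.

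For the concluding assertion, I would specialize to the case $X \subseteq Y$. Here $X \cap Y = X$, so $\dim(X \cap Y) = \dim X$, and plugging this into the established formula gives $\dim(X \boxplus Y) = \dim X + \dim Y - 2\dim X = \dim Y - \dim X$, as required. There is no genuine obstacle in this proof; the entire argument hinges on the single observation of which element to translate by in the isometry axiom. The reader-facing subtlety, if any, is merely the recognition that translating by $Y$ and appealing to $Y \boxplus Y = \{0\}$ is precisely what converts a pairwise distance into the dimension of a single group element, so I would make sure to highlight that choice explicitly rather than leave it implicit.
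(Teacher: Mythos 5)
Your proof is correct and complete. The paper itself records this lemma without proof (it is imported from \cite{BEV}), and your argument --- translating the pair $(X,Y)$ by $Y$ via the isometry axiom, then using the self-inversion $Y \boxplus Y = \{0\}$ together with $d_S(Z,\{0\}) = \dim Z$, and finally substituting $X \cap Y = X$ for the special case --- is precisely the standard argument from that reference, so there is nothing to add or repair.
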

\begin{lemma}(\cite{BEV}, Lemma~7)
	\label{L2}
	For any three subspaces $X, Y$ and $Z$ of a linear code $\mathcal{U}$ in $\mathbb{P}_q(n)$ with isometric linear addition $\boxplus$, the condition $Z = X \boxplus Y$ implies $Y = X \boxplus Z$.
\end{lemma}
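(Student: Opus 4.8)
The plan is to treat this entirely as a statement about the abelian group $(\mathcal{U}, \boxplus)$; the geometric content of the subspace distance will play no role, so I would not invoke Lemma~\ref{L1} or the isometry condition (iv). Only the group axioms (i)--(ii) together with the self-inversion property (iii) from Definition~\ref{L} are required. The key structural observation is that $(\mathcal{U}, \boxplus)$ is an elementary abelian $2$-group: because $X \boxplus X = \{0\}$ holds for every $X \in \mathcal{U}$, each element is its own inverse under $\boxplus$.

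Starting from the hypothesis $Z = X \boxplus Y$, I would $\boxplus$-add $X$ to both sides and then simplify the right-hand side. Using associativity and commutativity of the abelian group, followed by self-inversion (iii) and the identity property (ii), the computation collapses as
\begin{equation*}
X \boxplus Z = X \boxplus (X \boxplus Y) = (X \boxplus X) \boxplus Y = \{0\} \boxplus Y = Y.
\end{equation*}
This yields $X \boxplus Z = Y$, which is exactly the desired conclusion $Y = X \boxplus Z$.

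The main obstacle here is essentially nonexistent: the result is a one-line consequence of the group structure, and its only subtle point is recognizing that the self-inversion axiom forces the inverse of $X$ to coincide with $X$ itself, so that adding $X$ undoes the earlier application of $X$. It is worth emphasizing in the write-up that this symmetry between $Y$ and $Z$ is precisely the feature that distinguishes an $\mathbb{F}_2$-linear (elementary abelian $2$-group) structure from a general abelian group, and that it is this property which will later make the subspace-distance arithmetic of Lemma~\ref{L1} behave consistently across the code.
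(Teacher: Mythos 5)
Your proof is correct: the identity $X \boxplus Z = X \boxplus (X \boxplus Y) = (X \boxplus X) \boxplus Y = \{0\} \boxplus Y = Y$ follows immediately from associativity, the identity axiom (ii), and self-inversion (iii), and this is exactly the intended argument. Note that the paper itself records this lemma without proof (it is quoted from \cite{BEV}, Lemma~7), so there is nothing to diverge from; your observation that the isometry condition (iv) and Lemma~\ref{L1} are not needed, the statement being purely a consequence of the elementary abelian $2$-group structure of $(\mathcal{U}, \boxplus)$, is accurate.
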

\begin{lemma}(\cite{BEV}, Lemma~8)
	\label{L3}
	Let $\mathcal{U}$ be a linear code in $\mathbb{P}_q(n)$ and let $\boxplus$ be the isometric linear addition on $\mathcal{U}$. If $X$ and $Y$ are any two codewords of $\mathcal{U}$ such that $X \cap Y = \left\{0\right\}$, then $X \boxplus Y = X + Y$. Also $\dim(X \boxplus Y) = \dim X + \dim Y$.
\end{lemma}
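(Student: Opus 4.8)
The plan is to handle the two assertions in turn, with the dimension equality coming essentially for free and the set equality requiring the real work. For the dimension, I would simply invoke Lemma~\ref{L1}: since $X \cap Y = \{0\}$ forces $\dim(X \cap Y) = 0$, the identity $\dim(X \boxplus Y) = \dim X + \dim Y - 2\dim(X \cap Y)$ collapses immediately to $\dim(X \boxplus Y) = \dim X + \dim Y$. The same hypothesis also gives $\dim(X + Y) = \dim X + \dim Y - \dim(X \cap Y) = \dim X + \dim Y$ by the standard dimension formula, so $X \boxplus Y$ and $X + Y$ already share a common dimension. Writing $Z := X \boxplus Y$, it therefore suffices to prove the single inclusion $X + Y \subseteq Z$, since two subspaces of equal finite dimension related by containment must coincide.

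The key step is to show $X \subseteq Z$ and $Y \subseteq Z$ separately, and here I would exploit the group structure through Lemma~\ref{L2}. From $Z = X \boxplus Y$, Lemma~\ref{L2} yields $Y = X \boxplus Z$, and applying it again after rewriting $Z = Y \boxplus X$ gives $X = Y \boxplus Z$. Now I feed the first of these back into Lemma~\ref{L1}: on one hand $\dim(X \boxplus Z) = \dim Y$, and on the other hand Lemma~\ref{L1} expands this as $\dim X + \dim Z - 2\dim(X \cap Z)$. Substituting $\dim Z = \dim X + \dim Y$ and solving forces $\dim(X \cap Z) = \dim X$, hence $X \subseteq Z$. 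A symmetric computation starting from $X = Y \boxplus Z$ gives $\dim(Y \cap Z) = \dim Y$ and therefore $Y \subseteq Z$.

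Combining the two inclusions yields $X + Y \subseteq Z$, and since $\dim(X + Y) = \dim Z$ as noted above, I conclude $X + Y = Z = X \boxplus Y$, which settles both claims at once. The main obstacle here is conceptual rather than computational: the operation $\boxplus$ is defined purely through the abelian-group and isometry axioms, so a priori there is no reason the abstract sum $X \boxplus Y$ should have anything to do with the genuine subspace sum $X + Y$. The device that breaks this deadlock is Lemma~\ref{L2}, which converts the single equation $Z = X \boxplus Y$ into the pair of reversed relations $X = Y \boxplus Z$ and $Y = X \boxplus Z$; only by reading these backwards and measuring the resulting subspace distances via Lemma~\ref{L1} does the containment information emerge. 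I would be careful to apply the dimension arithmetic consistently, in particular using the trivial-intersection hypothesis to pin down $\dim Z$ \emph{before} expanding the distances $d_S(X, Z)$ and $d_S(Y, Z)$.
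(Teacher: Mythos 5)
Your proof is correct. Note that the paper itself does not prove this lemma at all---it is one of the three results imported from \cite{BEV} and explicitly ``recorded without the proofs''---so there is no in-paper argument to compare against; but your derivation is exactly the natural one and matches the spirit of the original: Lemma~\ref{L1} pins down $\dim(X \boxplus Y) = \dim X + \dim Y = \dim(X+Y)$, Lemma~\ref{L2} (plus commutativity of $\boxplus$) converts $Z = X \boxplus Y$ into $Y = X \boxplus Z$ and $X = Y \boxplus Z$, and feeding these back through Lemma~\ref{L1} forces $\dim(X \cap Z) = \dim X$ and $\dim(Y \cap Z) = \dim Y$, hence $X + Y \subseteq Z$ and equality by dimension count. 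All lemma applications are legitimate ($Z \in \mathcal{U}$ since $\boxplus$ maps into $\mathcal{U}$, and the group is abelian by Definition~\ref{L}), so the argument is complete as written.
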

The minimum distance of a linear code in $\mathbb{P}_q(n)$ is the minimum dimension of a nontrivial codeword as illustrated next.
\begin{lemma}
	\label{L4}
	If $\mathcal{U} \subseteq \mathbb{P}_q(n)$ is a linear code then the minimum distance $d$ of $\mathcal{U}$ is $d = \min\limits_{X \in \mathcal{U}^{*}} \dim X$.
\end{lemma}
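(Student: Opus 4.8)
The plan is to reduce the combinatorial minimum over all pairs of distinct codewords to a minimum over single nontrivial codewords, using Lemma~\ref{L1} to translate the subspace distance into the dimension of a $\boxplus$-sum. By definition the minimum distance is $d = \min_{X \neq Y \in \mathcal{U}} d_S(X, Y)$, and Lemma~\ref{L1} supplies $d_S(X, Y) = \dim(X \boxplus Y)$ for every pair $X, Y \in \mathcal{U}$. Hence I would rewrite $d = \min_{X \neq Y} \dim(X \boxplus Y)$ and study the set $S := \{X \boxplus Y : X, Y \in \mathcal{U},\ X \neq Y\}$, with the goal of showing $S = \mathcal{U}^{*}$, after which the claim is immediate.

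For the lower bound $d \ge \min_{W \in \mathcal{U}^{*}} \dim W$, I would argue that $S \subseteq \mathcal{U}^{*}$: for distinct $X, Y$ the element $X \boxplus Y$ is nontrivial. Indeed, $(\mathcal{U}, \boxplus)$ is an abelian group with identity $\{0\}$, and condition (iii) of Definition~\ref{L} says $X \boxplus X = \{0\}$, so each $X$ is its own inverse; were $X \boxplus Y = \{0\}$, then $Y$ would be the inverse of $X$, forcing $Y = X$ and contradicting $X \neq Y$. Thus $X \boxplus Y \in \mathcal{U}^{*}$, whence $\dim(X \boxplus Y) \ge \min_{W \in \mathcal{U}^{*}} \dim W$, and taking the minimum over distinct pairs gives the bound.

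For the upper bound $d \le \min_{W \in \mathcal{U}^{*}} \dim W$, I would exhibit each nontrivial codeword as the $\boxplus$-sum of a distinct pair. For any $W \in \mathcal{U}^{*}$, since $\{0\}$ is the identity we have $W = W \boxplus \{0\}$, and $W \neq \{0\}$ means that $(W, \{0\})$ is a genuine pair of distinct codewords. Lemma~\ref{L1} then yields $d_S(W, \{0\}) = \dim(W \boxplus \{0\}) = \dim W$, so $d \le \dim W$; minimizing over $W \in \mathcal{U}^{*}$ finishes this direction. Combining the two bounds gives $d = \min_{X \in \mathcal{U}^{*}} \dim X$.

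There is no serious obstacle here; the only point demanding care is the inclusion $S \subseteq \mathcal{U}^{*}$, that is, the fact that distinct codewords never sum to $\{0\}$. This rests squarely on the self-inverse hypothesis (iii) together with uniqueness of inverses in the group: without condition (iii), the identity $d_S(X, Y) = \dim(X \boxplus Y)$ by itself would not force $X \boxplus Y$ to be a nontrivial codeword, and the reduction would fail.
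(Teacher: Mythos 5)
Your proof is correct and follows essentially the same route as the paper: both arguments use Lemma~\ref{L1} to identify $d_S(X,Y)$ with $\dim(X \boxplus Y)$, obtain the upper bound from the pair $(W,\{0\})$, and obtain the lower bound from the fact that distinct codewords have a nontrivial $\boxplus$-sum (a point the paper leaves implicit in its contradiction argument and you spell out via uniqueness of inverses). The only cosmetic difference is that you prove both inequalities directly, while the paper phrases the lower bound as a proof by contradiction.
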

\begin{proof}
	Consider any $Z \in \mathcal{U}^{*}$ such that $\dim Z \le \dim Y$ for all $Y \in \mathcal{U}^{*}$. By Lemma~\ref{L1} and minimality of $d$, $d \le d_S (Z, \{0\}) = \dim Z$. Suppose $d < \dim Z$. There must exist $Z_1, Z_2 \in \mathcal{U}$ such that $d = d_S(Z_1, Z_2)$. Our claim is proved since $\dim (Z_1 \boxplus Z_2) = d_S(Z_1, Z_2) < \dim Z$ is a contradiction. 
\end{proof}
To understand how codewords of linear codes in projective spaces intersect with each other, the next lemma from \cite{BK} is crucial.
\begin{lemma}(\cite{BK}, Lemma~5)
	\label{L5}
		If $\mathcal{U} \subseteq \mathbb{P}_q(n)$ is a linear code and $X, Y \in \mathcal{U}$, then
	\begin{itemize}
		\item[(i)] $\dim X = \dim (X \cap Y) + \dim(X \cap (X \boxplus Y))$;
		\item[(ii)] $\dim(X \boxplus Y) = \dim(X \cap (X \boxplus Y)) + \dim(Y \cap (X \boxplus Y))$.
	\end{itemize}
\end{lemma}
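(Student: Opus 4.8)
The plan is to set $Z := X \boxplus Y$ and to extract both identities purely from the dimension formula of Lemma~\ref{L1}, applied to two cleverly chosen pairs, together with the involutive abelian group structure of $\mathcal{U}$. The only nonobvious ingredient is that we will also want to compute $\dim(X \boxplus Z)$, and to make that quantity useful we must recognize $X \boxplus Z$ as a codeword whose dimension we already control.

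First, for part~(i), I would record the two instances of Lemma~\ref{L1}:
\begin{align*}
\dim Z &= \dim X + \dim Y - 2\dim(X \cap Y), \\
\dim(X \boxplus Z) &= \dim X + \dim Z - 2\dim(X \cap Z).
\end{align*}
The crucial step is to simplify $X \boxplus Z$. Since $(\mathcal{U}, \boxplus)$ is abelian and every element is self-inverse (Definition~\ref{L}(iii)), we have $X \boxplus Z = X \boxplus (X \boxplus Y) = (X \boxplus X) \boxplus Y = \{0\} \boxplus Y = Y$; equivalently, this identification is exactly the content of Lemma~\ref{L2}. Hence $\dim(X \boxplus Z) = \dim Y$, and the second display becomes $\dim Y = \dim X + \dim Z - 2\dim(X \cap Z)$. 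Adding this to the first display cancels $\dim Y + \dim Z$ from both sides and leaves $2\dim X = 2\dim(X \cap Y) + 2\dim(X \cap Z)$, which is precisely the assertion $\dim X = \dim(X \cap Y) + \dim(X \cap (X \boxplus Y))$.

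For part~(ii), I would exploit the symmetry in $X$ and $Y$. Because $\boxplus$ is commutative, $Z = Y \boxplus X$ as well, so applying part~(i) with the roles of $X$ and $Y$ interchanged yields $\dim Y = \dim(X \cap Y) + \dim(Y \cap Z)$. Adding this to the identity of part~(i) gives
\begin{equation*}
\dim X + \dim Y = 2\dim(X \cap Y) + \dim(X \cap Z) + \dim(Y \cap Z).
\end{equation*}
The first display of part~(i) rearranges to $\dim X + \dim Y - 2\dim(X \cap Y) = \dim Z$, and substituting this into the left-hand side produces $\dim Z = \dim(X \cap Z) + \dim(Y \cap Z)$, which is the claim.

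No serious obstacle arises: once $Z = X \boxplus Y$ is fixed, everything reduces to linear bookkeeping in the dimension formula. The one point requiring care is the identification $X \boxplus Z = Y$ in part~(i); this is where the group axioms (self-inversion together with associativity and the identity) or, equivalently, Lemma~\ref{L2} are genuinely used, and without it the term $\dim(X \boxplus Z)$ could not be tied back to $\dim Y$. Everything else is addition and cancellation of dimensions.
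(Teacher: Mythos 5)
Your proof is correct. Note that the paper itself contains no proof of this statement: it is quoted from \cite{BK} (Lemma~5 there) and recorded without argument, so there is no internal proof to compare against. Your derivation is the natural self-contained one, and it uses only tools available in the paper: the identification $X \boxplus (X \boxplus Y) = Y$ (Lemma~\ref{L2}, or directly the group axioms of Definition~\ref{L}) converts $\dim(X \boxplus (X \boxplus Y))$ into $\dim Y$, after which both identities fall out of the dimension formula of Lemma~\ref{L1} applied to the pairs $(X, Y)$ and $(X, X \boxplus Y)$ by adding and cancelling; part~(ii) then follows from part~(i), commutativity, and one more substitution. All steps check out, including the degenerate cases $X = Y$ and $X = \{0\}$, so this stands as a valid replacement for the external citation.
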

We record two more results about linearity in projective spaces \cite{BK}.
\begin{lemma}(\cite{BK}, Lemma~6)
	\label{L6}
	Let $\mathcal{U} \subseteq \mathbb{P}_q(n)$ be a linear code. For all $X, Y \in \mathcal{U}$ the following is true:
	\begin{equation*}
	\dim (X \boxplus Y) \geq \dim X - \dim Y,
	\end{equation*}
	with equality if and only if $Y \subseteq X$.
\end{lemma}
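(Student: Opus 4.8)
The plan is to reduce the inequality to a trivial statement about dimensions of intersections, for which Lemma~\ref{L1} already supplies an exact formula. First I would apply Lemma~\ref{L1} to rewrite the left-hand side as
\begin{equation*}
\dim(X \boxplus Y) = \dim X + \dim Y - 2\dim(X \cap Y).
\end{equation*}
Substituting this into the target inequality $\dim(X \boxplus Y) \geq \dim X - \dim Y$ and cancelling $\dim X$ from both sides, the claim collapses to $2\dim Y \geq 2\dim(X \cap Y)$, i.e. to $\dim(X \cap Y) \leq \dim Y$. This residual inequality is immediate, since $X \cap Y$ is a subspace of $Y$ and hence its dimension cannot exceed $\dim Y$. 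So the inequality itself presents no real obstacle: it is a one-line manipulation once Lemma~\ref{L1} is invoked.

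For the equality condition I would trace back through the same chain of equivalences. Equality in $\dim(X \boxplus Y) = \dim X - \dim Y$ holds exactly when $\dim(X \cap Y) = \dim Y$. Since $X \cap Y \subseteq Y$, this forces $X \cap Y = Y$, which is precisely the condition $Y \subseteq X$. Conversely, if $Y \subseteq X$ then $\dim(X \cap Y) = \dim Y$, and Lemma~\ref{L1} gives $\dim(X \boxplus Y) = \dim X - \dim Y$ directly, this being exactly the special case already recorded in the final sentence of Lemma~\ref{L1}. The only point requiring a little care is to phrase the characterization as a genuine \emph{if and only if}; because each step of the reduction is an equivalence, both directions follow simultaneously without extra work.

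As an alternative that avoids Lemma~\ref{L1} and instead exercises the intersection identities, one could argue purely from Lemma~\ref{L5}. Using part~(i) to write $\dim(X \cap (X \boxplus Y)) = \dim X - \dim(X \cap Y)$ and substituting into part~(ii) yields
\begin{equation*}
\dim(X \boxplus Y) = \dim X - \dim(X \cap Y) + \dim(Y \cap (X \boxplus Y)),
\end{equation*}
whence the inequality follows because $\dim(X \cap Y) \leq \dim Y$ while $\dim(Y \cap (X \boxplus Y)) \geq 0$. This route makes the equality analysis marginally more delicate, since equality now requires both $\dim(X \cap Y) = \dim Y$ and $\dim(Y \cap (X \boxplus Y)) = 0$ to hold at once. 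For that reason I would prefer the first approach, which isolates a single equivalence and keeps the equality condition fully transparent.
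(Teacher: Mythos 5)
Your proof is correct. Note that the paper itself does not prove Lemma~\ref{L6} at all --- it is imported from the reference \cite{BK} and recorded without proof --- so there is no in-paper argument to compare against; your derivation stands as a self-contained justification. The main route is airtight: Lemma~\ref{L1} gives $\dim(X \boxplus Y) = \dim X + \dim Y - 2\dim(X \cap Y)$, the inequality collapses to $\dim(X \cap Y) \le \dim Y$, and since every step is an equivalence the equality case $\dim(X \cap Y) = \dim Y \Leftrightarrow Y \subseteq X$ comes for free. Your caution about the alternative route via Lemma~\ref{L5} is also well placed: there the converse direction (that $Y \subseteq X$ forces $\dim(Y \cap (X \boxplus Y)) = 0$) needs an extra ingredient such as Lemma~\ref{L7}(b) or the final clause of Lemma~\ref{L1}, so preferring the first argument is the right call.
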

\begin{lemma}(\cite{BK}, Lemma~7)
	\label{L7}
	Let $\mathcal{U}$ be a linear code in $\mathbb{P}_q(n)$ and $X, Y$ be two distinct nontrivial codewords of $\mathcal{U}$. Then,
	\begin{itemize}
		\item[(a)] $Y \subset X$ if and only if $(X \boxplus Y) \subset X$;
		\item[(b)] $Y \subset X$ if and only if $Y \cap (X \boxplus Y) = \{0\}$.
	\end{itemize}
\end{lemma}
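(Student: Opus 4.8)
The plan is to reduce all four implications to dimension counting, using the dimension formula for $X \boxplus Y$ (Lemma~\ref{L1}), the two intersection identities of Lemma~\ref{L5}, and the involutive structure of $\boxplus$. Throughout I will lean on the key fact that since $X$ and $Y$ are distinct, $X \boxplus Y \neq \{0\}$ (as $X \boxplus Y = \{0\}$ would force $Y = X$ by self-inversion), so $Z := X \boxplus Y$ is itself a nontrivial codeword, and by Lemma~\ref{L2} we have $Y = X \boxplus Z$. This lets me treat the pairs $(X, Y)$ and $(X, Z)$ symmetrically and bootstrap one implication from another.

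For the forward direction of (a), assume $Y \subset X$. Then $X \cap Y = Y$, so Lemma~\ref{L5}(i) gives $\dim(X \cap (X \boxplus Y)) = \dim X - \dim Y$. On the other hand, $Y \subseteq X$ together with Lemma~\ref{L1} gives $\dim(X \boxplus Y) = \dim X - \dim Y$. Hence $X \cap (X \boxplus Y)$ and $X \boxplus Y$ have the same finite dimension; since the former is contained in the latter, they coincide, which is exactly the containment $(X \boxplus Y) \subseteq X$. The containment is proper, because $(X \boxplus Y) = X$ would force $\dim Y = 0$, contradicting nontriviality of $Y$.

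The remaining implications follow cheaply. For the reverse direction of (a), assume $(X \boxplus Y) \subset X$; then $Z = X \boxplus Y$ is a nontrivial codeword properly contained in $X$, so applying the already-proved forward direction to the pair $(X, Z)$ yields $(X \boxplus Z) \subset X$, and $X \boxplus Z = Y$ by Lemma~\ref{L2}, giving $Y \subset X$. For (b) I will use Lemma~\ref{L5}(ii): when $Y \subset X$, the computation of the second paragraph shows $\dim(X \cap (X \boxplus Y)) = \dim(X \boxplus Y)$, so Lemma~\ref{L5}(ii) forces $\dim(Y \cap (X \boxplus Y)) = 0$, i.e. $Y \cap (X \boxplus Y) = \{0\}$. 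Conversely, if $Y \cap (X \boxplus Y) = \{0\}$, then Lemma~\ref{L5}(ii) gives $\dim(X \cap (X \boxplus Y)) = \dim(X \boxplus Y)$, whence $(X \boxplus Y) \subseteq X$, proper as before, and part (a) upgrades this to $Y \subset X$.

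The main obstacle is the forward direction of (a): one must obtain a genuine set containment $(X \boxplus Y) \subseteq X$, not merely an inequality between dimensions. The clean route is to match $\dim(X \cap (X \boxplus Y))$ with $\dim(X \boxplus Y)$ using Lemma~\ref{L5}(i) and Lemma~\ref{L1}, then invoke the elementary fact that a subspace contained in another of equal dimension must equal it. The only further subtlety is keeping track of proper versus improper containment, which is handled uniformly by the nontriviality of the codewords involved; everything else is a mechanical consequence of the involution $Y = X \boxplus (X \boxplus Y)$.
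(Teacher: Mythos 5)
Your proof is correct in all four implications, and every step checks out against the tools actually recorded in the paper: Lemma~\ref{L1} for the dimension of $X \boxplus Y$, Lemma~\ref{L2} for the involution $Y = X \boxplus (X \boxplus Y)$, and Lemma~\ref{L5} for the two intersection identities. Note, however, that the paper itself does not prove this statement at all --- Lemma~\ref{L7} is imported verbatim from reference [BK] and recorded without proof --- so there is no in-paper argument to compare against; what you have produced is a self-contained derivation from the paper's other cited lemmas, which is a genuine addition rather than a rederivation. The core idea (convert $Y \subset X$ into the dimension match $\dim(X \cap (X \boxplus Y)) = \dim(X \boxplus Y)$ and then upgrade equal dimensions plus containment to equality of subspaces) is sound, and the bootstrapping of the reverse direction of (a) via the pair $(X, X \boxplus Y)$ is clean, since you correctly verify that $X \boxplus Y$ is nontrivial and distinct from $X$. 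One could shorten parts of the argument by invoking Lemma~\ref{L6}, whose equality case directly characterizes $Y \subseteq X$ by $\dim(X \boxplus Y) = \dim X - \dim Y$, but your route through Lemma~\ref{L5} is equally valid.

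One small imprecision worth fixing: in the reverse direction of (b) you claim the containment $(X \boxplus Y) \subseteq X$ is ``proper as before,'' but the earlier properness argument used the identity $\dim(X \boxplus Y) = \dim X - \dim Y$, which is only available once $Y \subseteq X$ is known --- exactly what you are trying to prove at that point. The fix is immediate and does not affect correctness: if $X \boxplus Y = X$, then cancellation in the group $(\mathcal{U}, \boxplus)$ (or Lemma~\ref{L2}) gives $Y = X \boxplus X = \{0\}$, contradicting the nontriviality of $Y$. With that one-line repair the proof is complete.
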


In the subsequent sections we will make use of the above lemmas to bring out the intersecting properties of equidistant linear codes in projective spaces.
\subsection{Combinatorial Structures}
\label{sec:3}

We will briefly go through some basic definitions regarding a few combinatorial objects and some well known results.
\begin{definition}
	\label{D1}
	A $t$-$(v, k, \lambda)$ \emph{design} over a finite $v$-set $\mathcal{X}$ is a family $\mathcal{D}$ of distinct $k$-subsets of $\mathcal{X}$, called \emph{blocks}, such that each $t$-subset in $\binom{\mathcal{X}}{t}$ is contained in exactly $\lambda$ blocks of $\mathcal{D}$.
\end{definition}
\begin{definition}
	\label{D2}
	For any positive integer $\lambda \in \mathbb{Z}^{+}$, a $\lambda$-intersecting family $\mathcal{F} \subset \mathcal{P}(\mathcal{X})$ defined on a finite set $\mathcal{X}$ is a \emph{symmetric design} if $\mathcal{F} \subseteq \binom{\mathcal{X}}{k}$ for some $k \in \mathbb{Z}^{+}$ and $|\mathcal{F}| = |\mathcal{X}|$.
\end{definition}
The \emph{degree} of an element $x \in \mathcal{X}$ in a family $\mathcal{F} \subset \mathcal{P}(\mathcal{X})$, denoted by $\deg(x)$, is the number of members in $\mathcal{F}$ containing $x$: $\deg(x) := |\{A \in \mathcal{F}: x \in A\}|$. $\mathcal{F} \subset \mathcal{P}(\mathcal{X})$ is an \emph{$r$-regular} family if $\deg(x) = r$ for all $x \in \mathcal{X}$, $r$ being the \emph{replication number} of $\mathcal{F}$.

We will henceforth use the notation $(v, k, \lambda)$ design to imply a $2$-$(v, k, \lambda)$ design. Ryser proved that symmetric designs are particular cases of designs when the number of blocks and points are the same \cite{R}.
\begin{theorem}(\cite{ANC}, Theorem~3.1.4)
	\label{T1}
	For some positive integer $\lambda \in \mathbb{Z}^{+}$, if a $\lambda$-intersecting family $\mathcal{F} \subset \binom{\mathcal{X}}{k}$ is a symmetric design, then $\mathcal{F}$ is $k$-regular and any pair of distinct points $\{x, y\} \in \binom{\mathcal{X}}{2}$ in $\mathcal{X}$ is contained in exactly $\lambda$ blocks of $\mathcal{F}$.
\end{theorem}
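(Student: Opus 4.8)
The plan is to translate the combinatorial hypotheses into a statement about the $v \times v$ incidence matrix and then exploit its invertibility. Write $v := |\mathcal{X}| = |\mathcal{F}|$ (the two are equal since $\mathcal{F}$ is a symmetric design) and let $A$ be the point-block incidence matrix, with rows indexed by the points of $\mathcal{X}$ and columns by the blocks of $\mathcal{F}$, so that the $(x, B)$ entry of $A$ equals $1$ if $x \in B$ and $0$ otherwise. Because $\mathcal{F} \subseteq \binom{\mathcal{X}}{k}$, every column of $A$ has exactly $k$ ones, which says $\mathbf{1}^{T} A = k\, \mathbf{1}^{T}$. Because $\mathcal{F}$ is $\lambda$-intersecting and the blocks are distinct $k$-sets (so $\lambda < k$, since $\lambda = k$ would force two blocks to coincide), the $(B, B')$ entry of $A^{T}A$, which counts $|B \cap B'|$, equals $k$ on the diagonal and $\lambda$ off it. Hence
\begin{equation*}
A^{T}A = (k-\lambda)\, I + \lambda\, J,
\end{equation*}
where $I$ and $J$ are the $v \times v$ identity and all-ones matrices.

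First I would establish that $A$ is nonsingular. The matrix $(k-\lambda)I + \lambda J$ has eigenvalue $k - \lambda + \lambda v$ on the all-ones vector and eigenvalue $k - \lambda$ (with multiplicity $v-1$) on its orthogonal complement, so its determinant is $(k - \lambda + \lambda v)(k-\lambda)^{v-1}$. Since $0 < \lambda < k$, this quantity is strictly positive, so $A^{T}A$ is invertible, and as $A$ is square it follows that $A$ itself is invertible. Transposing $\mathbf{1}^{T} A = k\,\mathbf{1}^{T}$ and applying $A^{-1}$ then yields $\mathbf{1}^{T} A^{-1} = \tfrac{1}{k}\,\mathbf{1}^{T}$.

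The crux of the argument is to transfer the structure of $A^{T}A$ onto $A A^{T}$ by conjugation. Multiplying the displayed relation on the left by $A$ and on the right by $A^{-1}$ gives
\begin{equation*}
A A^{T} = (k-\lambda)\, I + \lambda\, A J A^{-1}.
\end{equation*}
Writing $\mathbf{r}$ for the vector of row sums of $A$ (so its $x$-th entry is $\deg(x)$), one checks that $AJ = \mathbf{r}\,\mathbf{1}^{T}$, and combining this with $\mathbf{1}^{T}A^{-1} = \tfrac1k\, \mathbf{1}^{T}$ yields $A J A^{-1} = \tfrac{1}{k}\,\mathbf{r}\,\mathbf{1}^{T}$. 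I expect this symmetry step to be the main obstacle: the left-hand side $AA^{T}$ is manifestly symmetric, whereas $\mathbf{r}\,\mathbf{1}^{T}$ is symmetric only when all entries of $\mathbf{r}$ are equal. Forcing $\mathbf{r}\,\mathbf{1}^{T} = \mathbf{1}\,\mathbf{r}^{T}$ therefore gives $r_{x} = r_{y}$ for every pair of points; since the total number of incidences satisfies $\sum_{x} r_{x} = vk$, each degree must equal $k$, i.e. $\mathbf{r} = k\,\mathbf{1}$ and $\mathcal{F}$ is $k$-regular.

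Finally, substituting $\mathbf{r} = k\,\mathbf{1}$ back into the conjugated identity gives $A A^{T} = (k-\lambda)\,I + \lambda\, J$. Reading off the off-diagonal $(x, y)$ entry of $A A^{T}$, which counts the number of blocks containing both $x$ and $y$, shows this number equals exactly $\lambda$ for every pair of distinct points, which completes the proof. The only steps demanding care beyond routine bookkeeping are the invertibility of $A$ and the symmetry deduction forcing the row sums to be constant; the remaining manipulations are direct matrix computations.
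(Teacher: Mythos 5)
The paper never proves this statement: it is recorded as a quoted result from the literature (\cite{ANC}, Theorem~3.1.4, going back to Ryser), so there is no internal proof to compare yours against. Your argument is the classical incidence-matrix proof of Ryser's theorem, and it is correct. The key steps all check out: $A^{T}A = (k-\lambda)I + \lambda J$ encodes uniformity and the $\lambda$-intersecting hypothesis (with $\lambda < k$ because two distinct $k$-sets cannot meet in $k$ points); the eigenvalue computation gives $\det(A^{T}A) = (k-\lambda+\lambda v)(k-\lambda)^{v-1} > 0$, so the square matrix $A$ is invertible; conjugation transfers the block structure to $AA^{T} = (k-\lambda)I + \tfrac{\lambda}{k}\,\mathbf{r}\,\mathbf{1}^{T}$; and since $AA^{T}$ is symmetric and $\lambda \neq 0$, the degree vector $\mathbf{r}$ must be constant, with the incidence count $\sum_{x} r_{x} = vk$ pinning the constant to $k$. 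Substituting back yields $AA^{T} = (k-\lambda)I + \lambda J$, whose off-diagonal entries are precisely the pair-degrees, giving the second claim. Two minor points of hygiene: your deduction $\lambda < k$ implicitly assumes $v \ge 2$ (the case $v=1$ is vacuous and trivially $k$-regular, so nothing breaks), and the final identification of $(AA^{T})_{xy}$ with the number of blocks containing both $x$ and $y$ deserves the same one-line justification you gave for $A^{T}A$; neither affects correctness.
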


The parameters of a design are known to be related to each other in the following way.
\begin{theorem}(\cite{J}, Theorem~12.2)
	\label{T2}
	If $\mathcal{D}$ is an $r$-regular $(v, k, \lambda)$ design containing $b$ blocks then
	\begin{eqnarray}
	r(k-1) &=& \lambda(v-1); \nonumber \\
	bk &=& vr. \nonumber
	\end{eqnarray}
\end{theorem}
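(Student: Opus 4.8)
The plan is to establish both identities purely by double-counting suitable incidences in $\mathcal{D}$; no structural input beyond the definition of a $(v,k,\lambda)$ design (each $2$-subset lies in exactly $\lambda$ blocks, each block is a $k$-set) and of $r$-regularity (each point lies in exactly $r$ blocks) is required.

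First I would prove $bk = vr$ by counting the set of incident point-block pairs (flags) $\{(x,B) : x \in \mathcal{X},\ B \in \mathcal{D},\ x \in B\}$ in two ways. Grouping by blocks, each of the $b$ blocks is a $k$-subset and hence contains exactly $k$ points, contributing $bk$ flags in total. Grouping by points, $r$-regularity says every point lies in exactly $r$ blocks, so the $v$ points contribute $vr$ flags. Equating the two counts gives $bk = vr$.

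For the relation $r(k-1) = \lambda(v-1)$ I would fix an arbitrary point $x \in \mathcal{X}$ and count the pairs $(\{x,y\}, B)$ with $y \in \mathcal{X} \setminus \{x\}$ and $B \in \mathcal{D}$ satisfying $\{x,y\} \subseteq B$. Counting by the choice of $y$ first, there are $v-1$ points $y \neq x$, and by the $2$-design property each pair $\{x,y\}$ is contained in exactly $\lambda$ blocks, giving $\lambda(v-1)$. Counting by the choice of block first, the point $x$ belongs to exactly $r$ blocks, and within each such block there remain $k-1$ points distinct from $x$, giving $r(k-1)$. Equating the two expressions yields the first identity.

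There is no genuine obstacle here: the result is the standard pair of counting identities for $2$-designs, and the entire argument is elementary double-counting. The only point requiring care is to choose the right incidence sets and to make sure the count over blocks versus over points inserts the block size $k$, the replication number $r$, and the index $\lambda$ in exactly the correct places. Note also that the second counting in the last paragraph implicitly uses that the number of blocks through any fixed $x$ is the common value $r$, so the derivation is consistent with, and in fact compatible with proving, the $r$-regularity hypothesis.
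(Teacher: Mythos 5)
Your proof is correct: both identities follow exactly as you argue, by double-counting flags $(x,B)$ for $bk=vr$ and by double-counting pairs $(\{x,y\},B)$ through a fixed point $x$ for $r(k-1)=\lambda(v-1)$, and your use of $r$-regularity is legitimate since it is a hypothesis of the statement. Note that the paper itself gives no proof at all here --- it records this as a classical result cited from the literature --- and your argument is precisely the standard one that the cited source uses, so there is nothing further to reconcile.
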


The particular case of a $(v, k, \lambda)$ design with parametric values $k = 3$ and $\lambda = 1$ is known as a \emph{Steiner triple system}, denoted as $S(2, 3, v)$ or $STS(v)$ \cite{S}. A formal definition follows.
\begin{definition}
	A Steiner triple system $S(2, 3, v)$ defined on a finite $v$-set $\mathcal{X}$ is a collection of distinct $3$-subsets of $\mathcal{X}$, known as \emph{blocks}, such that every pair of distinct points in $\mathcal{X}$ is contained in exactly one block.
\end{definition}

Kirkman showed that the existence of an $STS(v)$ depends solely on the positive integer $v \in \mathbb{Z}^{+}$ \cite{K}.
\begin{theorem}(\cite{K})
	\label{T3}
	An $STS(v)$ exists if and only if $v \in \mathbb{Z}^{+}$ satisfies the congruence relation $v \equiv 1, 3 \mod 6$.
\end{theorem}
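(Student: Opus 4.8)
\section*{Proof proposal for Theorem~\ref{T3}}

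The statement is a biconditional, so the plan is to treat the two implications separately: the necessity of the congruence is a short divisibility argument, whereas the sufficiency requires producing an explicit system for every admissible $v$. For necessity I would begin from the observation that an $STS(v)$ is exactly a $(v,3,1)$ design, and that it is automatically regular: fixing a point $x \in \mathcal{X}$, each of the remaining $v-1$ points determines with $x$ a unique block, and every block through $x$ absorbs exactly two such points, so $\deg(x) = (v-1)/2$ independently of $x$. With regularity in hand I would invoke Theorem~\ref{T2} with $k=3$ and $\lambda=1$: the relation $r(k-1)=\lambda(v-1)$ gives $r=(v-1)/2$, forcing $v$ to be odd, while $bk=vr$ gives $b=v(v-1)/6$, forcing $6 \mid v(v-1)$. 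Since $v$ odd already supplies the factor $2$, the remaining constraint is $3 \mid v(v-1)$, and checking the odd residues modulo $6$ eliminates $v \equiv 5$ and leaves precisely $v \equiv 1, 3 \pmod 6$.

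For sufficiency I would construct an $STS(v)$ in each of the two residue classes, after recording the small base systems: $STS(1)$ and $STS(3)$ are trivial, $STS(7)$ is the Fano plane, and $STS(9)$ is the point-line design of the affine plane $AG(2,3)$. For $v = 6t+3 \equiv 3 \pmod 6$ I would use the \emph{Bose construction}: place the points on $\mathbb{Z}_{2t+1} \times \{1,2,3\}$ and form triples from a commutative idempotent quasigroup on $\mathbb{Z}_{2t+1}$, for instance $x * y = (t+1)(x+y) \bmod (2t+1)$, taking the vertical triples $\{(a,1),(a,2),(a,3)\}$ together with the mixed triples dictated by the quasigroup. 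For $v = 6t+1 \equiv 1 \pmod 6$ I would use the \emph{Skolem construction} on $(\mathbb{Z}_{2t} \times \{1,2,3\}) \cup \{\infty\}$, where a Skolem-type partition of $\{1,\ldots,2t\}$, equivalently a half-idempotent commutative quasigroup of order $2t$, supplies the triples and the adjoined point $\infty$ patches those left uncovered. In both cases the verification reduces to a finite check that each unordered pair of points lies in exactly one triple.

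An alternative I would keep in reserve is a purely recursive route: two doubling constructions, $STS(w) \Rightarrow STS(2w+1)$ and $STS(w) \Rightarrow STS(2w+7)$, generate every admissible $v$ from the four base cases. Indeed, given $v > 9$ with $v \equiv 1, 3 \pmod 6$, one of $(v-1)/2$ or $(v-7)/2$ is a strictly smaller admissible integer, so strong induction on $v$ closes the argument once each doubling construction is shown to preserve the defining property. The main obstacle in either route is the sufficiency direction rather than the necessity, which is immediate from Theorem~\ref{T2}: the genuine work is the explicit combinatorial verification that the proposed triple families cover every pair exactly once. I would regard the quasigroup-based Bose and Skolem constructions as the most self-contained to present, since each disposes of an entire residue class in one step and its correctness is a transparent, if slightly tedious, case analysis on the difference structure of the underlying quasigroup.
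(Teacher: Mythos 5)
You should first note that the paper itself offers no proof of Theorem~\ref{T3}: it is recorded as a classical result with a citation to Kirkman and is used later (e.g., inside the proof of Theorem~\ref{T9}) purely as a black box, so there is no internal argument to compare yours against. Judged on its own, your proposal is the canonical proof and its structure is sound. The necessity half is essentially complete: your pairing argument at a fixed point establishes regularity with $r=(v-1)/2$, Theorem~\ref{T2} then forces $v$ odd and $6 \mid v(v-1)$, and the residue analysis modulo $6$ correctly eliminates $v \equiv 5 \pmod 6$. The sufficiency half, as you yourself acknowledge, is where all the substance lies, and what you give is an outline rather than a proof: the Bose construction for $v \equiv 3 \pmod 6$ and the Skolem construction for $v \equiv 1 \pmod 6$ are named with the right ingredients (idempotent, respectively half-idempotent, commutative quasigroups), but the assertion that correctness ``reduces to a finite check'' undersells the work --- one must verify, uniformly in $t$, that pairs within a column, pairs within a level, and pairs across consecutive levels each lie in exactly one triple, a case analysis that hinges on the cancellation laws of the quasigroup. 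Likewise, your recursive alternative is arithmetically sound (for admissible $v>9$, exactly one of $(v-1)/2$ and $(v-7)/2$ is a smaller admissible integer, as one checks by splitting on the parity of $t$ in $v=6t+1$ or $v=6t+3$), but the two doubling constructions $STS(w) \Rightarrow STS(2w+1)$ and $STS(w) \Rightarrow STS(2w+7)$ would themselves need to be exhibited and verified, the second of which is not trivial. Nothing in your plan would fail; it is the standard proof, presented at survey level of detail, of a result the paper deliberately imports rather than proves.
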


Another particular example of $(v, k, \lambda)$ designs frequently used as combinatorial objects is \emph{projective planes}, which can be defined in the following way.
\begin{definition}
	For a positive integer $p \in \mathbb{Z}^{+}$, a projective plane of order $p$ consists of a finite $(p^2 + p + 1)$-set $\mathcal{X}$ and a family $\mathcal{F} \subset \binom{\mathcal{X}}{p+1}$ of $(p+1)$-subsets of $\mathcal{X}$, called \emph{lines}, such that any pair of distinct points of $\mathcal{X}$ lie on a unique line in $\mathcal{F}$.
\end{definition}
A projective plane of order $p$ also contains $p^2+p+1$ lines \cite[Proposition~12.9]{J}. Thus, in other words, a projective plane of order $p$ is a $(p^2+p+1, p+1, 1)$ symmetric design. The only example of a projective plane of order $p = 1$ is the \emph{triangle}, while the unique projective plane of order $p=2$ is the \emph{Fano plane} (Fig.~\ref{F1}).

\begin{figure}[h!]
	\centering
	\includegraphics[width=0.3\linewidth]{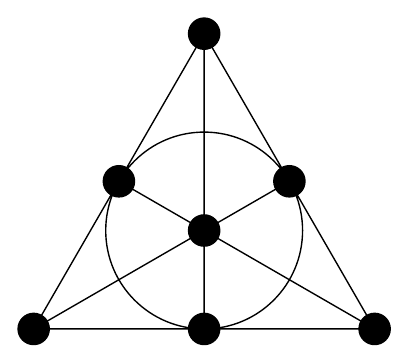}
	\caption{$(7, 3, 1)$ symmetric design: Fano plane}
	\label{F1}
\end{figure}



We now give a formal definition to the $q$-analog of $\lambda$-intersecting families of sets.
\begin{definition}
	A family of subspaces $\mathcal{F} \in \mathbb{P}_q(n)$ is $\lambda$-intersecting for some positive integer $\lambda \in \mathbb{Z}^{+}$ if $\dim (X \cap Y) = \lambda$ for all distinct members $X, Y$ of $\mathcal{F}$.
\end{definition}
Now onwards we will say that a linear code $\mathcal{U} \subseteq \mathbb{P}_q(n)$ is $\lambda$-intersecting if its nontrivial part $\mathcal{U}^{*}$ is a $\lambda$-intersecting family of subspaces in $\mathbb{P}_q(n)$. At this point we record the well known Fisher's inequality \cite{F, M} that puts a tight upper bound on the size of any $\lambda$-intersecting family defined on a finite set.
\begin{theorem}(\cite{ANC}, Theorem~2.3.9)
	\label{T4}
	If $\mathcal{F} \subset \mathcal{P}(\mathcal{X})$ is a $\lambda$-intersecting family on the finite set $\mathcal{X}$ for some positive integer $\lambda \in \mathbb{Z}^{+}$, then $|\mathcal{F}| \le |\mathcal{X}|$. 
\end{theorem}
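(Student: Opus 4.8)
The plan is to prove this nonuniform Fisher-type inequality by the linear algebra method: I would encode each member of $\mathcal{F}$ as a real characteristic vector and show that these vectors are linearly independent in $\mathbb{R}^{|\mathcal{X}|}$, so that their number cannot exceed the ambient dimension $|\mathcal{X}|$. Concretely, write $\mathcal{F} = \{A_1, \ldots, A_m\}$ with $m = |\mathcal{F}|$, identify $\mathcal{X}$ with $[N]$ where $N = |\mathcal{X}|$, and for each block $A_i$ let $v_i \in \mathbb{R}^N$ be its characteristic vector, so that the standard inner product records intersection sizes: $\langle v_i, v_j\rangle = |A_i \cap A_j|$. The $\lambda$-intersecting hypothesis then says $\langle v_i, v_j\rangle = \lambda$ for all $i \neq j$, while $\langle v_i, v_i\rangle = |A_i|$. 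The whole argument reduces to showing that the Gram matrix of the $v_i$ is nonsingular.

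Before that I would record one structural observation that tames the single degenerate case. If some block satisfied $|A_i| = \lambda$, then for every $j \neq i$ the equality $|A_i \cap A_j| = \lambda = |A_i|$ would force $A_i \subseteq A_j$; two distinct blocks of cardinality exactly $\lambda$ would then have to be mutually contained and hence equal, a contradiction. Thus \emph{at most one} member of $\mathcal{F}$ has size exactly $\lambda$, and every other member has $|A_i| > \lambda$.

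Next I would analyze the Gram matrix $G = \bigl(\langle v_i, v_j\rangle\bigr)_{1 \le i,j \le m}$, which decomposes as $G = D + \lambda J$, where $J$ is the $m \times m$ all-ones matrix and $D = \mathrm{diag}(|A_1|-\lambda, \ldots, |A_m|-\lambda)$ has nonnegative diagonal entries, at most one of which vanishes by the previous step. The key computation is that for every $x = (x_1,\ldots,x_m) \in \mathbb{R}^m$,
\[
x^{\top} G x = \sum_{i=1}^{m} (|A_i|-\lambda)\,x_i^2 + \lambda\Bigl(\sum_{i=1}^{m} x_i\Bigr)^2 ,
\]
and both summands are nonnegative since $\lambda \ge 1 > 0$. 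If $x^{\top} G x = 0$, the first sum forces $x_i = 0$ for every index with $|A_i| > \lambda$; because at most one index has $|A_i| = \lambda$, the vanishing of the second term, $\sum_i x_i = 0$, then pins down that last coordinate as well. Hence $x = 0$, so $G$ is positive definite and in particular nonsingular.

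To finish, note that $G = V^{\top}V$ where $V$ is the $N \times m$ matrix with columns $v_1,\ldots,v_m$. Any linear dependence $\sum_i c_i v_i = 0$ would give $Vc = 0$ and therefore $Gc = V^{\top}Vc = 0$, forcing $c = 0$ by nonsingularity of $G$; so the $v_i$ are linearly independent in $\mathbb{R}^N$, whence $m \le N$, i.e. $|\mathcal{F}| \le |\mathcal{X}|$. The one delicate point, which I would flag as the main obstacle, is precisely the block of size exactly $\lambda$: this is where positive definiteness of the diagonal part $D$ alone fails, and it is the structural observation (at most one such block) together with the strict positivity $\lambda > 0$ of the rank-one correction $\lambda J$ that restores positive definiteness of $G$.
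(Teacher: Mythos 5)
Your proof is correct, and there is nothing in the paper to compare it against: the paper records this theorem (Fisher's inequality in its nonuniform, $\lambda$-intersecting form) as a cited result from the literature and gives no proof of its own. What you have written is the standard linear-algebra argument (due to Majumdar and Isbell, refining Bose's proof of the uniform case): encode blocks as characteristic vectors, decompose the Gram matrix as $G = D + \lambda J$, and verify positive definiteness via the quadratic form $x^{\top}Gx = \sum_i (|A_i|-\lambda)x_i^2 + \lambda\bigl(\sum_i x_i\bigr)^2$. You also handle the one genuinely delicate point correctly: a block of size exactly $\lambda$ makes a diagonal entry of $D$ vanish, and your structural observation that at most one such block can exist (since two would coincide) is exactly what is needed so that the rank-one term $\lambda J$ restores definiteness. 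The chain $Vc = 0 \Rightarrow Gc = 0 \Rightarrow c = 0$ then gives linear independence and $|\mathcal{F}| \le |\mathcal{X}|$. This is a complete and correct proof of the statement as used in the paper (e.g., in Theorem~\ref{T8}, where it is applied to the $(2^d-1)$-intersecting family of punctured codewords).
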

However, it must be noted that the $q$-analog of Theorem~\ref{T4} is yet unknown.

It is not clear whether any $\lambda$-intersecting family defined on a finite set attaining the upper bound stated in Theorem~\ref{T4} must have a specific structure for all permissible values of $\lambda \in \mathbb{Z}^{+}$. For the particular case of $\lambda =1$, however, all types of extremal families are investigated by De Bruijn and Erd\"{o}s \cite{DBE}. We present only the case under the additional constraint of the family being uniform.
\begin{theorem}(\cite{DBE}, Theorem~1)
	\label{T5}
	Let $\mathcal{F} \subset \mathcal{P}(\mathcal{X})$ be a $1$-intersecting family defined on the finite set $\mathcal{X}$. If $|\mathcal{F}| = |\mathcal{X}|$ and $\mathcal{F} \subset \binom{\mathcal{X}}{k}$ for some positive integer $k \in \mathbb{Z}^{+}$, then $\mathcal{F}$ is a projective plane.
\end{theorem}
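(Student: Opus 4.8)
The plan is to recognize that the hypotheses make $\mathcal{F}$ a symmetric design in the sense of Definition~\ref{D2}, and then to read off the parameters of a projective plane from the structure theorems already available. First I would observe that since $\mathcal{F} \subset \binom{\mathcal{X}}{k}$ is $1$-intersecting with $|\mathcal{F}| = |\mathcal{X}| =: v$, it satisfies Definition~\ref{D2} with $\lambda = 1$ and is therefore a symmetric design. Applying Theorem~\ref{T1} then yields two facts simultaneously: $\mathcal{F}$ is $k$-regular (so the replication number is $r = k$), and every pair of distinct points of $\mathcal{X}$ lies in exactly one block. The latter statement is precisely the point--line incidence condition demanded in the definition of a projective plane, and it is the crux of the argument: the hypothesis only controls how two \emph{blocks} meet, whereas a projective plane is defined by how two \emph{points} meet, and Theorem~\ref{T1} (a Ryser-type self-duality for symmetric designs) is exactly what converts one into the other.

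Having established that $\mathcal{F}$ is a $k$-regular $(v, k, 1)$ design with $b = v$ blocks, I would next pin down $v$ in terms of $k$ using the parameter relations of Theorem~\ref{T2}. Substituting $\lambda = 1$, $r = k$, and $b = v$ into $r(k-1) = \lambda(v-1)$ gives $k(k-1) = v - 1$, that is, $v = k^2 - k + 1$; the second relation $bk = vr$ is then automatically satisfied. Setting $p := k - 1$, this reads $v = p^2 + p + 1$ while each block has size $k = p + 1$. Thus $\mathcal{X}$ has $p^2 + p + 1$ points, each block is a $(p+1)$-subset, and any two points lie on a unique block, which is verbatim the definition of a projective plane of order $p$; hence $\mathcal{F}$ is a projective plane.

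The only point requiring care is non-degeneracy: to obtain a genuine projective plane of positive order one needs $p = k - 1 \ge 1$, i.e.\ $k \ge 2$. I would dispose of the trivial possibilities ($k \le 1$, which force $v \le 1$ and render the $1$-intersecting condition vacuous) separately, so that the substitution $p = k - 1$ always produces a legitimate order. Beyond this, no serious obstacle is expected, since the bulk of the work is carried by Theorems~\ref{T1} and~\ref{T2}; the argument is essentially a matter of checking that the hypotheses of those results are met and that the resulting numerology matches the projective-plane parameters.
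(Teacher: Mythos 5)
Your proof is correct; note, though, that there is no proof in the paper to compare against --- Theorem~\ref{T5} is stated as a quoted result of De Bruijn and Erd\"{o}s \cite{DBE}, so any argument here is necessarily a different route. Yours is a clean, self-contained derivation using only the paper's own toolkit: the hypotheses are verbatim Definition~\ref{D2} with $\lambda = 1$, Theorem~\ref{T1} (Ryser's duality) converts the hypothesis on how \emph{blocks} meet into the conclusion on how \emph{points} meet (every pair of points on exactly one block, plus $k$-regularity), and Theorem~\ref{T2} with $r = k$, $b = v$, $\lambda = 1$ pins down $v = k^2 - k + 1 = p^2 + p + 1$ for $p := k-1$, which together with block size $k = p+1$ and the unique-line property is exactly the paper's definition of a projective plane of order $p$; the numerology checks out, and you are right that $bk = vr$ is automatic here. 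The trade-off against \cite{DBE} is genuine: De Bruijn--Erd\"{o}s prove the stronger non-uniform statement (any $1$-intersecting family on $v$ points has at most $v$ members, extremal families being near-pencils or projective planes) by a direct counting argument, whereas your route leans on Ryser's theorem and covers only the uniform case --- which is all this paper ever uses (Proposition~\ref{P1}, where $k = 3$). Two points deserve emphasis: first, your exclusion of $k \le 1$ is necessary rather than pedantic, since a one-point set $\mathcal{X} = \{x\}$ with $\mathcal{F} = \{\{x\}\}$ satisfies the hypotheses vacuously yet is not a projective plane of any order $p \in \mathbb{Z}^{+}$, so the statement implicitly assumes this degenerate case away; second, the derivation is non-circular only because Theorem~\ref{T1} is established by incidence-matrix methods independent of \cite{DBE}, which is indeed the case for Ryser's theorem.
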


Members of a $\lambda$-intersecting family of subspaces always intersect pairwise in a subspace of dimension $\lambda$. When all members of a $\lambda$-intersecting uniform family also intersect mutually in a subspace of dimension $\lambda$, the family is known as a \emph{sunflower}.
\begin{definition}
	A $t$-intersecting family of $k$-subspaces $\mathbb{S} \subset \mathbb{G}_q(n, k)$ is a sunflower if $X \cap Y = C$ for all distinct $X, Y \in \mathbb{S}$ and some fixed $t$-subspace $C \in \mathbb{G}_q(n, t)$. $C$ is called the \emph{center} of the sunflower $\mathbb{S}$.
\end{definition}

The maximum size of a sunflower formed by subsets of a finite set is given by the widely known Erd\"{o}s-Ko-Rado theorem \cite{EKS}. The $q$-analog of the same also exists as a consequence of the work done in \cite{H, FW, GN, N}.
\begin{theorem}(\cite{H}, \cite{FW}, \cite{GN}, \cite{N})
	\label{T6}
	Let $\mathcal{F} \subset \mathbb{G}_q(n, k)$ be an intersecting family of $k$-subspaces, where $2k \le n$. Then $|\mathcal{F}| \le {{n-1} \brack {k-1}}_{q}$, and equality holds if and only if
	\begin{itemize}
		\item [(a)] $\mathcal{F} = \{V \in \mathbb{G}_q(n, k): W \subset V\}$ for some $W \in \mathbb{G}_q(n, 1)$, or
		\item [(b)] $n = 2k$ and $\mathcal{F} = \{Y \in \mathbb{G}_q(2k, k): Y \subset H)\}$ for some $H \in \mathbb{G}_q(2k, 2k-1)$.
	\end{itemize}
\end{theorem}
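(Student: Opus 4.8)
The plan is to recast the statement as an independent-set problem in the \emph{$q$-Kneser graph} and to attack it with the Delsarte--Hoffman ratio bound, which is the cleanest route to both the cardinality bound and the classification of extremal families. Let $K$ be the graph on vertex set $\mathbb{G}_q(n,k)$ in which two $k$-subspaces are adjacent precisely when they meet trivially, i.e. $X \cap Y = \{0\}$. The hypothesis $2k \le n$ guarantees that disjoint $k$-subspaces exist, so $K$ is a nonempty regular graph, and an intersecting family $\mathcal{F}$ is by definition exactly an independent set of $K$. Thus the entire theorem becomes a statement about the maximum independent sets of $K$.

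First I would record the two spectral parameters of $K$ supplied by the Grassmann association scheme. A routine count shows that the number of $k$-subspaces disjoint from a fixed one is the valency $d = q^{k^2}{n-k \brack k}_q$. Since $K$ is a relation in the Grassmann scheme $J_q(n,k)$, whose eigenvalues are $q$-analogues of the Eberlein polynomials, the smallest eigenvalue works out to $\lambda_{\min} = -q^{k(k-1)}{n-k-1 \brack k-1}_q$, attained on the first nontrivial eigenspace $V_1$ of dimension ${n \brack 1}_q - 1$. With these in hand, the Hoffman ratio bound $|\mathcal{F}| \le N\cdot \frac{-\lambda_{\min}}{d-\lambda_{\min}}$ with $N = {n \brack k}_q$ simplifies, via the identity ${n-1 \brack k-1}_q/{n \brack k}_q = (q^k-1)/(q^n-1)$, to exactly $|\mathcal{F}| \le {n-1 \brack k-1}_q$. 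That disposes of the inequality.

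That the two listed families attain the bound is immediate, and I would check it first: the point-pencil in (a) consists of all $k$-subspaces through a fixed $1$-subspace $W$, and passing to the quotient $\mathbb{F}_q^n/W$ shows there are ${n-1 \brack k-1}_q$ of them, any two sharing $W$; in case (b), any two $k$-subspaces of the $(2k-1)$-dimensional hyperplane $H$ must intersect since $k+k>2k-1$, and there are ${2k-1 \brack k-1}_q = {n-1 \brack k-1}_q$ such subspaces. The real content, and the hard part, is the converse: equality in Hoffman's bound forces $\chi_\mathcal{F} - \frac{|\mathcal{F}|}{N}\mathbf{1} \in V_1$, so $\chi_\mathcal{F}$ lives in $V_0 \oplus V_1$. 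I would then classify the $0/1$ vectors of this small space. Using the description of $V_0\oplus V_1$ as the span of the point-incidence functions $g_p$, where $g_p(V)=1$ if the $1$-subspace $p$ is contained in $V$ and $g_p(V)=0$ otherwise, one argues that such a $\chi_\mathcal{F}$ must be constant on suitable fibres, and a local rigidity analysis pins $\mathcal{F}$ down to a single point-pencil, yielding (a).

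Finally, the appearance of the extra family (b) exactly when $n = 2k$ is an artifact of the self-duality of $\mathbb{G}_q(2k,k)$: the orthogonal-complement map $V \mapsto V^{\perp}$ is an automorphism of $K$ carrying a point-pencil to the family of all $k$-subspaces inside a hyperplane, so both structures occur among the extremal sets, whereas for $2k<n$ no such duality exists and only (a) survives. I expect the eigenspace rigidity step --- showing that no $0/1$ vector other than the indicators of the pencils lands in $V_0\oplus V_1$ --- to be the main obstacle; this is precisely the part established in \cite{GN, N}, the bound itself going back to \cite{H, FW}. A self-contained treatment would require either their interlacing-plus-local-structure argument or an algebraic-shifting analogue adapted to subspaces.
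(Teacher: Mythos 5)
This theorem is not proved in the paper at all: it is recorded as a known background result (the $q$-analog of the Erd\"{o}s--Ko--Rado theorem) with citations to \cite{H}, \cite{FW}, \cite{GN}, \cite{N}, and the paper only ever uses its statement (in Proposition~\ref{P2} and Proposition~\ref{P4}). So there is no in-paper argument to compare yours against; your proposal has to stand on its own as a proof of the cited theorem.

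Judged that way, your strategy is the right one --- it is essentially the Godsil--Newman spectral route. The reduction to independent sets of the $q$-Kneser graph, the valency $q^{k^2}{{n-k} \brack k}_q$, the least eigenvalue $-q^{k(k-1)}{{n-k-1} \brack {k-1}}_q$, and the simplification of the Hoffman ratio bound to ${{n-1} \brack {k-1}}_q$ are all correct, although the eigenvalue is asserted rather than derived, so even the inequality half of the theorem rests on quoted facts about the Grassmann scheme. The genuine gap is the equality characterization, which is the actual content of the ``if and only if''. You correctly observe that equality in the ratio bound forces $\chi_{\mathcal{F}} \in V_0 \oplus V_1$, but the step from there to ``$\mathcal{F}$ is a point-pencil, or $n=2k$ and $\mathcal{F}$ is the family of all $k$-subspaces of a hyperplane'' is exactly the hard rigidity theorem, and your proposal does not carry it out: ``one argues that such a $\chi_{\mathcal{F}}$ must be constant on suitable fibres, and a local rigidity analysis pins $\mathcal{F}$ down'' is a placeholder, not an argument, and your closing paragraph concedes this by deferring to \cite{GN} and \cite{N}. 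Likewise, the duality remark explains why family (b) exists when $n=2k$, but it does not show that nothing else attains the bound in that case, nor that (b) is absent when $2k<n$; both claims require the same unexecuted classification of $0/1$ vectors in $V_0 \oplus V_1$. So as a blind proof the proposal is an accurate roadmap of the literature argument, but it is incomplete on precisely the step that distinguishes this theorem from the plain ratio bound.
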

\section{Equidistant Linear Codes in $\mathbb{P}_q(n)$}
\label{sec:4}

An \emph{equidistant linear code} in the projective space $\mathbb{P}_q(n)$ is simply a linear code in $\mathbb{P}_q(n)$ that is equidistant.
\begin{definition}
	A linear code $\mathcal{U} \subseteq \mathbb{P}_q(n)$ is equidistant with constant distance $r$ if $d_S (X, Y) = r$ for all distinct codewords $X, Y \in \mathcal{U}$.
\end{definition}

Our goal is to show that equidistant linear codes in projective spaces are analogous to $\lambda$-intersecting linear codes for suitable values of $\lambda \in \mathbb{Z}^{+}$. In the process we will establish that the nontrivial part of any such linear code is uniform, i.e. it contains only subspaces of a fixed dimension. Moreover, the constant dimension is same as the constant distance. We begin with the proof that the distance of an equidistant linear code is permitted to be only an even integer.
\begin{lemma}
	\label{L8}
	The constant distance of an equidistant linear code $\mathcal{U}$ is always even.
\end{lemma}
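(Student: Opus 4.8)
The plan is to exploit two facts simultaneously: that $\{0\}$ is forced to be a codeword (Definition~\ref{L}) and that Lemma~\ref{L1} converts the subspace distance into a dimension statement. First I would pin down the dimensions of the nontrivial codewords. For any $X \in \mathcal{U}^{*}$, the pair $(X, \{0\})$ is a pair of distinct codewords, so the equidistance hypothesis gives $d_S(X, \{0\}) = r$. But $d_S(X, \{0\}) = \dim X + \dim\{0\} - 2\dim(X \cap \{0\}) = \dim X$, whence $\dim X = r$ for every nontrivial codeword. In particular $\mathcal{U}^{*}$ is uniform of dimension $r$.

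Next I would take two distinct nontrivial codewords $X, Y \in \mathcal{U}^{*}$ (their existence is the only hypothesis I will need; see below) and apply the distance formula once more. Since $\dim X = \dim Y = r$, we get
\begin{equation*}
r = d_S(X, Y) = \dim X + \dim Y - 2\dim(X \cap Y) = 2r - 2\dim(X \cap Y),
\end{equation*}
so that $\dim(X \cap Y) = r/2$. Because the dimension of a subspace is a nonnegative integer, we must have $r/2 \in \mathbb{Z}$, forcing $r$ to be even. This is the argument in essence; it also yields for free the value $r/2$ of the pairwise intersection dimension, which is presumably what feeds into the later $d$-intersecting characterization.

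The only real subtlety, and the step I would be most careful about, is the implicit counting assumption: the chain above requires at least two distinct nontrivial codewords, i.e.\ $|\mathcal{U}| \ge 3$. If $\mathcal{U} = \{\{0\}\}$ the claim is vacuous, and if $\mathcal{U}$ has a single nontrivial codeword $X$ then its only nonzero distance is $\dim X$, which need not be even; so the lemma should be read as applying to codes carrying genuine constant-distance content. I would either state this restriction explicitly or observe that the degenerate cases impose no condition on $r$, and then the even-distance conclusion follows unconditionally from the integrality of $\dim(X \cap Y) = r/2$. No deeper structural input (such as Lemmas~\ref{L5}--\ref{L7}) appears necessary here: the result is a direct consequence of Lemma~\ref{L1} combined with the uniformity of $\mathcal{U}^{*}$ extracted in the first step.
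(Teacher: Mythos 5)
Your proof is correct and follows essentially the same route as the paper's: equidistance against the codeword $\{0\}$ forces all nontrivial codewords to have dimension $r$, and then $r = d_S(X,Y) = 2r - 2\dim(X \cap Y)$ for distinct $X, Y \in \mathcal{U}^{*}$ gives evenness from integrality of $\dim(X \cap Y)$. Your explicit caveat about needing at least two nontrivial codewords is a point the paper's proof silently assumes as well (it also picks distinct $X, Y \in \mathcal{U}^{*}$), so flagging the degenerate case is a small but legitimate refinement rather than a divergence.
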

\begin{proof}
	Suppose $r \in \mathbb{Z}^{+}$ is the constant distance of $\mathcal{U}$. Then $r = d_S (X, Y) = \dim X + \dim Y - 2\dim (X \cap Y)$ for some distinct $X, Y \in \mathcal{U}^{*}$ and the statement is proved since $\dim X = d_S (X, \{0\}) = d_S (Y, \{0\}) = \dim Y$. 
\end{proof}
\begin{theorem}
	\label{T7}
	If $\mathcal{U} \subseteq \mathbb{P}_q(n)$ is a linear code, then the following statements are equivalent:
	\begin{itemize}
		\item [(a)] $\mathcal{U}$ is equidistant with constant distance $2d$.
		\item [(b)] $\mathcal{U}^{*} \subseteq \mathbb{G}_q(n, 2d)$.
		\item [(c)] $\mathcal{U}^{*}$ is a $d$-intersecting family.
	\end{itemize}
\end{theorem}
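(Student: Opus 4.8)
The plan is to prove the cyclic chain $(a) \Rightarrow (b) \Rightarrow (c) \Rightarrow (a)$, which is the most economical route and isolates the single genuinely nontrivial step. Throughout I write $\boxplus$ for the isometric linear addition on $\mathcal{U}$, and I would use repeatedly that, since $\{0\}$ is the group identity and every codeword is self-inverse (Definition~\ref{L}(ii),(iii)), one has $X \boxplus Y = \{0\}$ if and only if $X = Y$; consequently $X \boxplus Y \in \mathcal{U}^{*}$ whenever $X$ and $Y$ are distinct. This closure property is what lets me treat $X \boxplus Y$ as a genuine third codeword, and it drives every implication.

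For the two easy directions I would argue purely through the distance formula of Lemma~\ref{L1}. For $(a) \Rightarrow (b)$: if $\mathcal{U}$ is equidistant with constant distance $2d$, then for every $X \in \mathcal{U}^{*}$ we have $\dim X = d_S(X, \{0\}) = 2d$ (exactly the observation already behind the proof of Lemma~\ref{L8}), so $\mathcal{U}^{*} \subseteq \mathbb{G}_q(n, 2d)$. For $(b) \Rightarrow (c)$: given distinct $X, Y \in \mathcal{U}^{*}$, the codeword $X \boxplus Y$ again lies in $\mathcal{U}^{*}$ and hence has dimension $2d$, so Lemma~\ref{L1} yields $2d = \dim(X \boxplus Y) = \dim X + \dim Y - 2\dim(X \cap Y) = 4d - 2\dim(X \cap Y)$, forcing $\dim(X \cap Y) = d$; thus $\mathcal{U}^{*}$ is $d$-intersecting.

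The implication $(c) \Rightarrow (a)$ is the heart of the argument and the step I expect to be the main obstacle, because the $d$-intersecting hypothesis constrains only pairwise intersections and says nothing a priori about the dimensions of the codewords themselves. The key device is to feed the codeword $X \boxplus Y$ back into Lemma~\ref{L5}(i). Assuming $|\mathcal{U}^{*}| \geq 2$ (the degenerate case of a single nonzero codeword being immediate), I would fix $X \in \mathcal{U}^{*}$, choose any distinct $Y \in \mathcal{U}^{*}$, and set $W := X \boxplus Y$. Then $W \in \mathcal{U}^{*}$ and $W \neq X$ (otherwise $Y = \{0\}$), so both $(X, Y)$ and $(X, W)$ are pairs of distinct nontrivial codewords, and the $d$-intersecting hypothesis gives $\dim(X \cap Y) = \dim(X \cap W) = d$. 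Lemma~\ref{L5}(i) then reads $\dim X = \dim(X \cap Y) + \dim(X \cap W) = 2d$, so every nontrivial codeword has dimension exactly $2d$. With uniform dimension established, equidistance follows from Lemma~\ref{L1}: for distinct $X, Y \in \mathcal{U}^{*}$ one gets $d_S(X, Y) = 2d + 2d - 2d = 2d$, while $d_S(X, \{0\}) = \dim X = 2d$, so all pairwise distances equal $2d$. This closes the cycle and proves the three statements equivalent. The only subtlety to flag in the write-up is the bootstrapping in $(c) \Rightarrow (a)$: the intersection condition alone is too weak, and it is precisely the linear structure (closure of $\mathcal{U}^{*}$ under $\boxplus$ together with Lemma~\ref{L5}) that upgrades "$d$-intersecting" to "uniform of dimension $2d$."
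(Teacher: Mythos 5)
Your proposal is correct and follows essentially the same route as the paper: the easy implications come from Lemma~\ref{L1} applied to $d_S(X,\{0\})$ and to $\dim(X \boxplus Y)$, and the crucial step (upgrading the $d$-intersecting hypothesis to uniform dimension $2d$) uses Lemma~\ref{L5}(i) on the pair $X$, $X \boxplus Y$ exactly as the paper does in its proof of (c)~$\Rightarrow$~(b). The only differences are organizational — you arrange the implications as a cycle (a)~$\Rightarrow$~(b)~$\Rightarrow$~(c)~$\Rightarrow$~(a) rather than two biconditionals, and you spell out the point the paper leaves implicit, namely that $X \boxplus Y \neq X$ so the intersecting hypothesis applies to that pair.
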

\begin{proof}
	((a) $\Leftrightarrow$ (b)) Suppose $\mathcal{U}$ is equidistant with constant distance $2d \in \mathbb{Z}^{+}$. For any nontrivial codeword $X \in \mathcal{U}^{*}$,  $d_S(X, \{0\}) = 2d$, which implies that $\dim X = 2d$, i.e. $\mathcal{U}^{*} \subseteq \mathbb{G}_q(n, 2d)$.
	Conversely, $d_S (X, Y) = \dim(X \boxplus Y) = 2d$ for all distinct $X, Y \in \mathcal{U}$ if $\mathcal{U}^{*} \subseteq \mathbb{G}_q(n, 2d)$.
	
	((b) $\Leftrightarrow$ (c)) If $X, Y \in \mathcal{U}^{*} \subseteq \mathbb{G}_q(n, 2d)$ are distinct, we have $2d = \dim (X \boxplus Y) = 4d - 2\dim(X \cap Y)$ by Lemma~\ref{L1}, i.e. $\dim (X \cap Y) = d$.
	
	Conversely, say $\mathcal{U}^{*}$ is $d$-intersecting. $X \boxplus Y \ne \{0\}$ for any distinct $X, Y \in \mathcal{U}^{*}$; thus according to Lemma~\ref{L5} we can write: $\dim X = \dim (X \cap Y) + \dim (X \cap (X \boxplus Y)) = 2d$. Hence, proved. 
\end{proof}
\begin{remark}
	\label{R1}
It must be noted that no linear code $\mathcal{U} \subseteq \mathbb{P}_q(n)$ can have a nontrivial part $\mathcal{U}^{*} \subseteq \mathbb{G}_q(n, r)$ for some odd integer $r \in \mathbb{Z}^{+}$. Otherwise, $d_S (X, Y) = \dim (X \boxplus Y) = r$ for any pair of codewords $X, Y \in \mathcal{U}$, which indicates that $\mathcal{U}$ is equidistant with an odd constant distance $r$, a contradiction to Lemma~\ref{L8}.
\end{remark}
\begin{corollary}
	\label{C1}
	The largest normalized minimum distance of a linear code $\mathcal{U} \subseteq \mathbb{P}_q(n)$ is $\frac{1}{2}$ and it is achieved if and only if $\mathcal{U}$ is equidistant.
\end{corollary}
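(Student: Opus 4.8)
The plan is to reduce everything to the two quantities $d = \min_{X \in \mathcal{U}^{*}} \dim X$ and $l = \max_{X \in \mathcal{U}^{*}} \dim X$, and then read off both the bound and its equality condition from a single elementary inequality. First I would invoke Lemma~\ref{L4} to identify the minimum distance $d$ of the linear code with $\min_{X \in \mathcal{U}^{*}} \dim X$. Since $\{0\}$ has dimension zero, the quantity $l = \max_{X \in \mathcal{U}} \dim X$ appearing in the definition of $\delta$ equals $\max_{X \in \mathcal{U}^{*}} \dim X$. Substituting into $\delta = d/(2l)$ gives
\[
\delta = \frac{\min_{X \in \mathcal{U}^{*}} \dim X}{2\,\max_{X \in \mathcal{U}^{*}} \dim X} \le \frac{1}{2},
\]
where the inequality is simply $\min \le \max$ together with positivity of the maximal dimension. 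This already establishes that $\frac{1}{2}$ is the largest attainable value of $\delta$.

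Next I would isolate the equality case directly from the displayed expression: $\delta = \frac{1}{2}$ holds if and only if $\min_{X \in \mathcal{U}^{*}} \dim X = \max_{X \in \mathcal{U}^{*}} \dim X$, i.e.\ if and only if every nontrivial codeword has one and the same dimension $k$, equivalently $\mathcal{U}^{*} \subseteq \mathbb{G}_q(n, k)$ for some $k$. Thus the entire statement collapses to proving that $\mathcal{U}^{*}$ being uniform is equivalent to $\mathcal{U}$ being equidistant.

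The forward direction is immediate: if $\mathcal{U}$ is equidistant with constant distance $2d$, then Theorem~\ref{T7}((a)$\Rightarrow$(b)) gives $\mathcal{U}^{*} \subseteq \mathbb{G}_q(n, 2d)$, which is uniform, whence $\delta = \frac{1}{2}$. The step requiring slightly more care is the converse, where one must rule out an odd common dimension before applying Theorem~\ref{T7}. If $\mathcal{U}^{*} \subseteq \mathbb{G}_q(n, k)$, then Remark~\ref{R1} forbids $k$ from being odd, so $k = 2d$ for some positive integer $d$, and Theorem~\ref{T7}((b)$\Rightarrow$(a)) yields that $\mathcal{U}$ is equidistant with constant distance $2d$. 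I expect this parity subtlety---invoking Remark~\ref{R1} to guarantee that a uniform nontrivial part must sit inside an even-dimensional Grassmannian---to be the only genuine point in the argument, since the bound itself is a one-line consequence of Lemma~\ref{L4}.
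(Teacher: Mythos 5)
Your proposal is correct and follows essentially the same route as the paper: identify the minimum distance with $\min_{X \in \mathcal{U}^{*}} \dim X$ via Lemma~\ref{L4}, observe $\delta = \min/(2\max) \le \frac{1}{2}$ with equality exactly when $\mathcal{U}^{*}$ is uniform, and then convert uniformity to equidistance via Theorem~\ref{T7}. The only difference is cosmetic: you explicitly invoke Remark~\ref{R1} to rule out an odd common dimension before applying Theorem~\ref{T7}, a parity point the paper leaves implicit in its closing sentence, so your write-up is, if anything, slightly more complete.
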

\begin{proof}
	If $\delta$ is the normalized minimum distance of $\mathcal{U}$ then by Lemma~\ref{L4} and definition of $\delta$,
	\begin{equation*}
	\delta = \frac{\min \limits_{X, Z \in \mathcal{U}^{*}, X \ne Z} d_S (X, Z)}{2 \max \limits_{Y \in \mathcal{U}^{*}} \dim Y} = \frac{\min \limits_{X \in \mathcal{U}^{*}} \dim X}{2 \max \limits_{Y \in \mathcal{U}^{*}} \dim Y} \le \frac{1}{2}.
	\end{equation*}
	Clearly, $\delta = \frac{1}{2}$ if and only if $\mathcal{U}^{*} \subseteq \mathbb{G}_q(n, r)$ for some positive integer $r \in \mathbb{Z}^{+}$. The result then follows from Theorem~\ref{T7}. 
\end{proof}

Thus, the property of constant distance helps codes in projective spaces acquire a uniquely beneficial distance-distribution within the constraints of linearity. Theorem~\ref{T7} helps to bring out the $\lambda$-intersecting structure of equidistant linear codes within a Grassmannian, and we will henceforth follow that structure. We end this section with a final observation.
\begin{remark}
	\label{R2}
	Theorem~\ref{T7} implicits that the size of an equidistant linear code in $\mathbb{P}_q(n)$ is upper bounded by $$\max\limits_{1 \le d \le \lfloor \frac{n}{2} \rfloor} \mathcal{A}_q(n, 2d, 2d) +1,$$ which is analogous to Theorem~\ref{A}.
\end{remark}
\section{Constructions and Bounds for Equidistant Linear Codes in $\mathbb{P}_2(n)$}
\label{sec:5}
Linear codes in $\mathbb{P}_2(n)$ are conjectured to have a maximum size of $2^n$ in \cite{BEV}. We will prove that the conjectured bound holds when a code has an additional constraint of constant distance. First, let us go through the only example of an equidistant linear code in the literature which attains this bound. Braun et al. constructed the following code without bringing in the notion of equidistant code \cite{BEV}.
\paragraph{Construction: Equidistant linear code for $n=3, q =2$}
\begin{enumerate}
	\item Choose $\alpha$, a primitive element of $\mathbb{F}_2^3$ such that $\alpha^3 + \alpha + 1 = 0$.
	\item Define a code $\mathcal{U} \subseteq \mathbb{P}_2(n)$ as $\mathcal{U} := \{\{0\}\} \cup \{\langle \alpha^i, \alpha^{i+1}\rangle: 0\le i \le 6\}$.
	\item Define a commutative function $\boxplus : \mathcal{U} \times \mathcal{U} \longrightarrow \mathcal{U}$ as
	\begin{align*}
	\{0\} \boxplus \{0\} &:= \{0\}; \\
	\langle \alpha^i, \alpha^{i+1}\rangle \boxplus \{0\} &:= \langle \alpha^i, \alpha^{i+1}\rangle &\quad \text{for} \ 0 \le i \le 6; \\
	\langle \alpha^i, \alpha^{i+1}\rangle \boxplus \langle \alpha^j, \alpha^{j+1}\rangle &:= \langle \alpha^i + \alpha^j, \alpha^{i+1}+ \alpha^{j+1}\rangle &\quad \text{for} \ 0 \le i, j \le 6. 
	\end{align*}
	\item Note that $\mathcal{U}$ is a linear code in $\mathbb{P}_2(3)$ with size $|\mathcal{U}| = 2^3$; $\boxplus$ acts as a linear addition on $\mathcal{U}$.
	\item $\mathcal{U}$ is equidistant with a constant distance of $2$.
\end{enumerate}
It is easy to observe that the collection of blocks $\mathcal{S} := \{X \backslash \{\mathbf{0}\}: X \in \mathcal{U}^{*}\}$ defined on the finite set $\mathbb{F}_2^3 \backslash \{\mathbf{0}\}$ is the Fano plane. As a matter of fact, the above constructed code is the only example of an equidistant linear code in $\mathbb{P}_2(3)$ with size $2^3$.
\begin{proposition}
	\label{P1}
	There is only one equidistant linear code in $\mathbb{P}_2(3)$ with size of $2^3$.
\end{proposition}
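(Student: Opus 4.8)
The plan is to show that any equidistant linear code $\mathcal{U} \subseteq \mathbb{P}_2(3)$ of size $2^3$ is forced, up to relabeling, to coincide with the construction above. By Theorem~\ref{T7}, since $|\mathcal{U}| = 8 > 1$, the code is equidistant with some constant distance $2d$, and its nontrivial part satisfies $\mathcal{U}^{*} \subseteq \mathbb{G}_2(3, 2d)$ with $\mathcal{U}^{*}$ being $d$-intersecting. Since $2d \le 3$, the only even possibility is $2d = 2$, so $\mathcal{U}^{*} \subseteq \mathbb{G}_2(3, 2)$ and $\mathcal{U}^{*}$ is $1$-intersecting. Thus $\mathcal{U}^{*}$ consists of $7$ distinct planes (2-dimensional subspaces) of $\mathbb{F}_2^3$, pairwise meeting in a line (1-dimensional subspace). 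Since $\left[{3 \atop 2}\right]_2 = 7$, the nontrivial codewords must be \emph{all} seven 2-subspaces of $\mathbb{F}_2^3$; there is no freedom left in the choice of the underlying set of subspaces.

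First I would make the incidence structure explicit. For each $X \in \mathcal{U}^{*}$ set $B_X := X \setminus \{\mathbf{0}\}$, a $3$-subset of the $7$-point set $\mathbb{F}_2^3 \setminus \{\mathbf{0}\}$. Two distinct $2$-subspaces of $\mathbb{F}_2^3$ intersect in a $1$-subspace, i.e. in a single nonzero vector, so $|B_X \cap B_Y| = 1$ for distinct $X,Y$; equivalently every pair of nonzero vectors lies in exactly one common plane. Hence the collection $\mathcal{S} = \{B_X : X \in \mathcal{U}^{*}\}$ is a $1$-intersecting family of $3$-subsets on a $7$-set of cardinality $7$, which by Theorem~\ref{T5} is a projective plane of order $2$, namely the Fano plane. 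As noted in the excerpt this is unique. So the \emph{set} $\mathcal{U}$ is completely pinned down: it is $\{\{0\}\}$ together with all seven planes of $\mathbb{F}_2^3$, regardless of any choices.

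The main obstacle, and the only real content, is to argue that the addition operation $\boxplus$ is also uniquely determined once the underlying set is fixed, so that the code is genuinely unique as a linear code rather than merely as a set of subspaces. Here I would use the structural lemmas. By Definition~\ref{L}, $(\mathcal{U}, \boxplus)$ is an abelian group of order $8$ in which every element is its own inverse (Definition~\ref{L}(iii)), so it is an elementary abelian $2$-group, i.e. a $3$-dimensional vector space over $\mathbb{F}_2$ with identity $\{0\}$. For distinct nontrivial $X, Y$ the $1$-intersecting property gives $\dim(X \cap Y) = 1$, so $d_S(X,Y) = 2 + 2 - 2 = 2$ and by Lemma~\ref{L1} $\dim(X \boxplus Y) = 2$; moreover $X \boxplus Y \ne \{0\}$ and $X \boxplus Y \ne X, Y$, so $X \boxplus Y$ is a \emph{third} plane of $\mathcal{U}^{*}$. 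The key point is to show this third plane is forced: I claim $X \boxplus Y$ is exactly the unique plane whose associated block $B_{X \boxplus Y}$ is the third line of the Fano plane through the point $B_X \cap B_Y$, equivalently the plane containing the two nonzero vectors of $(X \cup Y) \setminus (X \cap Y)$ other than those in $X$ and $Y$.

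To nail this down I would compute directly in $\mathbb{F}_2^3$. Write $X \cap Y = \langle w \rangle$, and pick $u \in X$, $v \in Y$ with $X = \langle w, u\rangle$, $Y = \langle w, v\rangle$; then $u, v, w$ are linearly independent, so $u + v \ne \mathbf{0}$ and $u+v \notin X \cup Y$. The candidate third plane is $Z := \langle u+v, w\rangle$ or $\langle u+v, ?\rangle$; I would verify using Lemma~\ref{L5}(i), namely $\dim X = \dim(X \cap Y) + \dim(X \cap (X \boxplus Y))$, that $\dim(X \cap (X \boxplus Y)) = 1$ and likewise for $Y$, forcing $X \boxplus Y$ to contain a specific line of $X$ and a specific line of $Y$, which together determine it uniquely. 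A short case analysis (there are only $\binom{7}{2} = 21$ unordered pairs, all equivalent under the cyclic symmetry $\alpha^i \mapsto \alpha^{i+1}$) then shows the induced operation agrees with $\langle \alpha^i, \alpha^{i+1}\rangle \boxplus \langle \alpha^j, \alpha^{j+1}\rangle = \langle \alpha^i + \alpha^j, \alpha^{i+1} + \alpha^{j+1}\rangle$. Since both the set and the operation are forced, uniqueness follows. I expect the delicate part to be verifying that the constraints from Lemma~\ref{L5} leave no alternative abelian-group structure on the same eight subspaces; the isometry condition Definition~\ref{L}(iv) combined with the forced dimensions should rule out any nonstandard pairing, but this deserves careful checking rather than being taken for granted.
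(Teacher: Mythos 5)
Your first paragraph alone proves the proposition in the sense the paper intends: by Theorem~\ref{T7}, $\mathcal{U}^{*}$ is a $1$-intersecting subset of $\mathbb{G}_2(3,2)$ of size $7$, and since ${3 \brack 2}_2 = 7$ this forces $\mathcal{U}^{*} = \mathbb{G}_2(3,2)$. That counting step is in fact more elementary than the paper's own argument, which instead passes the block family $\mathcal{S} = \{X \backslash \{\mathbf{0}\} : X \in \mathcal{U}^{*}\}$ through the De Bruijn--Erd\"{o}s theorem (Theorem~\ref{T5}) and then invokes uniqueness of the projective plane of order $2$; your argument reaches the same conclusion without either ingredient.

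The rest of the proposal, which you call ``the only real content,'' pursues a claim that is false: the addition $\boxplus$ is \emph{not} determined by the underlying set of subspaces, and the paper says so explicitly immediately after this proposition (Remark~\ref{R3}) and quantifies the non-uniqueness in Remark~\ref{R4}. Your forcing argument breaks at exactly the step you flagged as delicate. Lemma~\ref{L5}(i) does give $\dim\bigl(X \cap (X \boxplus Y)\bigr) = 1$, but in $\mathbb{F}_2^3$ \emph{every} pair of distinct planes meets in a line, so this constraint excludes none of the five candidate planes for $X \boxplus Y$. Likewise the isometry condition, Definition~\ref{L}(iv), is vacuous for this code: every pair of distinct elements of $\mathcal{U}$, including pairs involving $\{0\}$, is at subspace distance exactly $2$, and translation by a group element is a bijection by cancellation, so \emph{any} abelian group structure on these eight subspaces with identity $\{0\}$ and all elements self-inverse satisfies (iv) automatically. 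Consequently any elementary abelian $2$-group structure on $\mathcal{U}$ yields a valid $\boxplus$; equivalently, each of the $30$ labeled Steiner triple systems on the seven planes (each a copy of the Fano plane, hence inducing an associative operation) defines a legitimate and genuinely different addition on the same codeword set. So the proposition can only assert, and the paper only proves, uniqueness of the code as a subset of $\mathbb{P}_2(3)$; the stronger statement you set out to establish is not merely unproven in your sketch but provably wrong, and a referee would ask you to delete everything after your first paragraph.
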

\begin{proof}
	For any equidistant linear code $\mathcal{U} \subseteq \mathbb{P}_2(3)$, the collection of subsets $\mathcal{S} := \{X \backslash \{\mathbf{0}\}: X \in \mathcal{U}^{*}\}$ forms a uniform $1$-intersecting family on $\mathbb{F}_2^3 \backslash \{\mathbf{0}\}$. If $|\mathcal{S}| = 7$, $\mathcal{S}$ must be a projective plane of order $2$ by Theorem~\ref{T5}. The statement holds since there exists only one such projective plane. 
\end{proof}
\begin{remark}
	\label{R3}
	The linear addition $\boxplus$ in the construction presented above can also be defined in other ways. We will illustrate the same in Section~\ref{sec:6}.
\end{remark}

We will show that nontrivial equidistant linear codes in $\mathbb{P}_2(n)$ exist for all $n \ge 3$. However, the upper bound conjectured by Braun et al. is never reached for $n > 3$ as proved next.
\begin{theorem}
	\label{T8}
	If $\mathcal{U} \subseteq \mathbb{P}_2(n)$ is an equidistant linear code, then $|\mathcal{U}| \le 2^n$. The bound is reached only if $n=3$.
\end{theorem}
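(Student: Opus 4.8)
The plan is to translate the $q=2$ statement into a purely set-theoretic intersecting-family problem and then apply Fisher's inequality together with the design-parameter relations. First I would invoke Theorem~\ref{T7}: since $\mathcal{U}$ is equidistant, it has some even constant distance $2d$, its nontrivial part $\mathcal{U}^{*}$ lies in $\mathbb{G}_2(n, 2d)$, and $\mathcal{U}^{*}$ is $d$-intersecting. To each nontrivial codeword $X$ I would associate the set $X \backslash \{\mathbf{0}\}$ of its nonzero vectors, obtaining a family $\mathcal{S} := \{X \backslash \{\mathbf{0}\}: X \in \mathcal{U}^{*}\}$ of subsets of the ground set $\mathbb{F}_2^n \backslash \{\mathbf{0}\}$, which has exactly $2^n - 1$ elements. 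Each block of $\mathcal{S}$ has size $2^{2d}-1$, and since any two distinct codewords meet in a $d$-dimensional subspace, any two distinct blocks meet in exactly $2^d - 1$ points. Thus $\mathcal{S}$ is a uniform $\lambda$-intersecting family with $\lambda = 2^d - 1 \ge 1$.

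The upper bound then follows immediately: applying Fisher's inequality (Theorem~\ref{T4}) to $\mathcal{S}$ gives $|\mathcal{U}^{*}| = |\mathcal{S}| \le |\mathbb{F}_2^n \backslash \{\mathbf{0}\}| = 2^n - 1$, whence $|\mathcal{U}| = |\mathcal{U}^{*}| + 1 \le 2^n$.

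For the equality case, I would suppose $|\mathcal{U}| = 2^n$, so that $|\mathcal{S}| = 2^n - 1 = |\mathbb{F}_2^n \backslash \{\mathbf{0}\}|$. Then $\mathcal{S}$ is a symmetric design in the sense of Definition~\ref{D2}, so by Theorem~\ref{T1} it is $k$-regular (hence $r = k$) and each pair of points lies in exactly $\lambda$ blocks, making it a genuine $2$-design; crucially, its design parameter $\lambda$ coincides with the intersection parameter $2^d - 1$. Theorem~\ref{T2} then yields $k(k-1) = \lambda(v-1)$. Substituting $v = 2^n - 1$, $k = 2^{2d}-1$ and $\lambda = 2^d - 1$, I would set $a := 2^d$, factor $2^{2d}-1 = (a-1)(a+1)$, cancel the common factor $(a-1)$, and simplify to $2^n = a(a-1)(a+2) = 2^{d+1}(2^{d}-1)(2^{d-1}+1)$.

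The decisive step is reading off the integer solutions of this equation. Since $2^d - 1$ is odd and the right-hand side must be a power of $2$, I would argue $2^d - 1 = 1$, forcing $d = 1$; substituting back gives $2^n = 2^{2}\cdot 1 \cdot 2 = 2^3$, i.e. $n = 3$. For every $d \ge 2$ the odd factor $2^d - 1 \ge 3$ contributes an odd prime, so no power of $2$ can result and the bound is unattainable. I expect this final number-theoretic reduction — showing the parameter equation admits only $(d, n) = (1, 3)$ — to be the main obstacle, whereas the reduction to an intersecting family and the application of Fisher's inequality are routine.
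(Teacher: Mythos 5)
Your proof is correct and follows essentially the same route as the paper: reduce via Theorem~\ref{T7} to the uniform $(2^d-1)$-intersecting family $\mathcal{S}$ on $\mathbb{F}_2^n \backslash \{\mathbf{0}\}$, apply Fisher's inequality (Theorem~\ref{T4}) for the bound $|\mathcal{U}| \le 2^n$, and in the equality case use Theorems~\ref{T1} and~\ref{T2} to arrive at the same parameter equation $(2^{2d}-1)(2^{2d}-2)=(2^d-1)(2^n-2)$. The only difference is the cosmetic final step: you factor the equation as $2^n = 2^{d+1}(2^d-1)(2^{d-1}+1)$ and note that the odd factor $2^d-1$ must equal $1$, whereas the paper rewrites it as $2^{n-d-1}=2^{2d-1}+2^{d-1}-1$ and argues by parity; both force $d=1$ and hence $n=3$.
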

\begin{proof}
	According to Theorem~\ref{T7} the family of subsets $\mathcal{S} := \{X \backslash \{\mathbf{0}\}: X \in \mathcal{U}^{*}\}$ defined on $\mathbb{F}_2^n \backslash \{\mathbf{0}\}$ is $(2^d -1)$-intersecting  for some positive integer $d \in \mathbb{Z}^{+}$. Therefore, by Theorem~\ref{T4}, $|\mathcal{U}| = |\mathcal{S}|+1 \le 2^n$.
	
	Suppose $\mathcal{U}$ meets this bound. Since $2d$ is the constant distance of $\mathcal{U}$, the family $\mathcal{S}$ defined on the underlying set $\mathbb{F}_2^n \backslash \{\mathbf{0}\}$ is uniform and $(2^d-1)$-intersecting such that $|\mathcal{S}| = 2^n -1 = |\mathbb{F}_2^n \backslash \{\mathbf{0}\}|$; thus $\mathcal{S}$ is a $(2^n-1, 2^{2d}-1, 2^d-1)$ symmetric design. By Theorem~\ref{T1}, $\mathcal{S}$ is $(2^{2d}-1)$-regular.Therefore, Theorem~\ref{T2} yields: $(2^{2d}-1)(2^{2d}-2) = (2^d-1)(2^n-2)$. Upon simplifying we obtain
	\begin{equation}
	\label{E1}
	2^{n-d-1} = 2^{2d-1} + 2^{d-1} -1.
	\end{equation}
	Both sides of \eqref{E1} should be even since $n \ge 2d+1 > d+1$. Thus we must have $2^{d-1}-1 =0$, i.e. $d=1$, implying that the only solution to \eqref{E1} is $n=3$. 
\end{proof}

Linear codes in $\mathbb{P}_q(n)$ defined in Definition~\ref{D1} form vector spaces over $\mathbb{F}_2$ irrespective of the value of $q$. Consequently, size of equidistant linear codes in $\mathbb{P}_2(n)$ can be as large as $2^{n-1}$. A construction of such class of codes is demonstrated next for all $n \ge 3$.

\paragraph{Construction: Equidistant linear code for $n \ge 3, q=2$}
\begin{enumerate}
	\item Fix a basis $\mathcal{B}=\{u_0, u_1, \ldots, u_{n-1}\}$ of $\mathbb{F}_2^n$ over $\mathbb{F}_2$.
	\item Define a code $\mathcal{U} \subseteq \mathbb{P}_2(n)$ as $\mathcal{U} := \{\{0\}\} \cup \{\langle u_0, v\rangle: v \in \langle u_1, \ldots, u_{n-1}\rangle\}$.
	\item Define a linear addition $\boxplus: \mathcal{U} \times \mathcal{U} \longrightarrow \mathcal{U}$ as follows:
	\begin{align*}
	X \boxplus X &:= \{0\} &\quad \forall X \in \mathcal{U}; \\
	X \boxplus \{0\} &:= X &\quad \forall X \in \mathcal{U}; \\
	\langle u_0, v\rangle \boxplus \langle u_0, w\rangle &:= \langle u_0, v+w\rangle &\quad \forall v, w \in \langle u_1, \ldots, u_{n-1}\rangle \ni v \ne w.
	\end{align*}
	\item $\mathcal{U}^{*}$ is the collection of all $2$-dimensional subspaces of $\mathbb{F}_2^n$ containing $\langle u_0\rangle$, i.e. $\mathcal{U^{*}} = \{Z: Z \in \mathbb{G}_2(n, 2), u_0 \in Z\}$ is the largest sunflower in $\mathbb{G}_2(n, 2)$ with $\langle u_0\rangle$ as its center. Evidently, $|\mathcal{U}^{*}| = {{n-1} \brack {1}}_2 = 2^{n-1}-1$.
	\item $\mathcal{U}$ is an equidistant linear code in $\mathbb{P}_2(n)$ of size $2^{n-1}$ with constant distance of $2$.
\end{enumerate}
As is the case with construction of equidistant linear code in $\mathbb{P}_2(3)$ presented in \cite{BEV}, the linear addition $\boxplus$ in the above construction can be defined in different ways.

The above approach of constructing an equidistant linear code adopts a method of constructing a sunflower and appending $\{0\}$ as a codeword to it. We argue that this is the only way of constructing equidistant linear codes in $\mathbb{P}_2(n)$ with constant distance of $2$ and size of $2^{n-1}$ for all $n > 4$. However, there is another way to construct such a code for $n=4$: Fix a subspace $T$ of $\mathbb{F}_2^4$ such that $\dim T = 3$. The code consists of all $2$-dimensional subspaces of $\mathbb{F}_2^4$ contained in $T$ as well as $\{0\}$. Clearly the size of this code is ${{3} \brack {2}}_2 + 1 = 2^3$. The linear addition can be defined in a similar way used by Braun et al.
\begin{proposition}
	\label{P2}
	Let $\mathcal{U} \subseteq \mathbb{P}_2(n)$ be an equidistant linear code of constant distance $2$. For $n \ge 4, |\mathcal{U}| = 2^{n-1}$ only if $\mathcal{U}^{*}$ is a sunflower or $\mathcal{U}^{*} = \{Y \in \mathbb{G}_2(4, 2): Y \subset T \in \mathbb{G}_2(4, 3)\}$.
\end{proposition}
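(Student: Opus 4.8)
The plan is to recognize that, modulo the size hypothesis, the statement is exactly the equality case of the $q$-analog of the Erd\"{o}s-Ko-Rado theorem recorded as Theorem~\ref{T6}. So the entire argument should reduce to checking the three hypotheses of that theorem and then reading off its two extremal configurations.

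First I would translate the data. Since $\mathcal{U}$ has constant distance $2 = 2d$ with $d = 1$, Theorem~\ref{T7} gives $\mathcal{U}^{*} \subseteq \mathbb{G}_2(n, 2)$ and tells us that $\mathcal{U}^{*}$ is a $1$-intersecting family. The key elementary observation is that, for $2$-dimensional subspaces, being $1$-intersecting coincides with being an intersecting family in the sense required by Theorem~\ref{T6}: two distinct members of $\mathbb{G}_2(n,2)$ can meet in a subspace of dimension at most $1$, so any nontrivial intersection is forced to have dimension exactly $1$. Hence $\mathcal{U}^{*}$ is an intersecting family of $2$-subspaces.

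Next I would verify the remaining hypotheses of Theorem~\ref{T6} with $k = 2$. The constraint $2k \le n$ is precisely the standing assumption $n \ge 4$. Moreover, the hypothesis $|\mathcal{U}| = 2^{n-1}$ gives $|\mathcal{U}^{*}| = 2^{n-1} - 1 = {{n-1} \brack {1}}_2$, so $\mathcal{U}^{*}$ meets the Erd\"{o}s-Ko-Rado bound with equality. Invoking the equality characterization of Theorem~\ref{T6} then yields exactly the dichotomy claimed: either $\mathcal{U}^{*} = \{V \in \mathbb{G}_2(n,2): W \subset V\}$ for some $W \in \mathbb{G}_2(n,1)$, which is a sunflower with center $W$, or $n = 2k = 4$ and $\mathcal{U}^{*} = \{Y \in \mathbb{G}_2(4,2): Y \subset H\}$ for some $H \in \mathbb{G}_2(4,3)$, which is the second alternative with $T = H$.

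There is essentially no hard calculation here; the only point needing care is the reduction in the first step, namely the identification of the $1$-intersecting hypothesis coming from Theorem~\ref{T7} with the (nontrivially) intersecting hypothesis required by Theorem~\ref{T6}, together with confirming that the target size $2^{n-1}-1$ is literally the Gaussian coefficient ${{n-1}\brack{1}}_2$ appearing as the extremal value. Should one prefer a self-contained argument avoiding Theorem~\ref{T6}, the same dichotomy also follows from the classical fact that a family of pairwise-intersecting lines in a projective space is either a pencil through a common point or a set of lines lying in a common plane; combining that classification with the count ${{3}\brack{2}}_2 = 7$ of lines in a plane forces the planar case to occur only when $2^{n-1} - 1 = 7$, i.e. $n = 4$.
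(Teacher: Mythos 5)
Your proposal is correct and follows essentially the same route as the paper: apply Theorem~\ref{T7} to identify $\mathcal{U}^{*}$ as a $1$-intersecting family in $\mathbb{G}_2(n,2)$ of size $2^{n-1}-1 = {{n-1} \brack {1}}_2$, then invoke the equality case of the $q$-analog Erd\"{o}s--Ko--Rado theorem (Theorem~\ref{T6}) to obtain the sunflower/planar dichotomy. Your explicit check that $1$-intersecting coincides with intersecting for $2$-subspaces is a detail the paper leaves implicit, but the argument is the same.
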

\begin{proof}
	By Theorem~\ref{T7}, $\mathcal{U}^{*} \subseteq \mathbb{G}_2(n, 2)$ is $1$-intersecting. As $n \ge 4$ and $|\mathcal{U}^{*}| = 2^{n-1}-1 = {{n-1} \brack {1}}_2$, the rest follows from Theorem~\ref{T6}. 
\end{proof}

Finding equidistant linear codes in $\mathbb{P}_2(n)$ of maximum size with constant distance of $2d$ for $d > 1$ is equivalent to finding maximal $d$-intersecting families in $\mathbb{G}_2(n, 2d)$ (Theorem~\ref{T7}). However, no example of construction of a $\lambda$-intersecting family in $\mathbb{G}_2(n, k)$ exist in the literature for $1 < \lambda < k-1$ to the best of our knowledge. We will discuss this issue in detail in Section~\ref{sec:6}.

%
%

\section{Equidistant Linear Codes in $\mathbb{P}_q(n)$ for $q > 2$}
\label{sec:6}
Linear codes in $\mathbb{P}_2(n)$ were first constructed in \cite{BEV}. While one of those constructions, viz. \emph{codes derived from a fixed basis} was generalized for any prime power $q$ \cite{PS}, the first novel construction of a linear code in $\mathbb{P}_q(n)$ for $q > 2$ was presented in \cite[Theorem~7]{BK2}. However, no equidistant linear code in $\mathbb{P}_q(n)$ has so far been constructed for $q > 2$.

\subsection{Existence of Equidistant Linear Codes in $\mathbb{P}_q(n)$}
\label{sec:6a}
Equidistant linear codes in $\mathbb{P}_2(n)$ with maximum size were shown to be symmetric designs in Section~\ref{sec:5} by exploiting the order of the underlying field. Similar technique does not apply for $q > 2$. We thus bring in the notion of Steiner triple systems to form equidistant linear codes out of certain $\lambda$-intersecting families in Grassmannians. The finding throws some light on the upper bound on size of equidistant linear codes in $\mathbb{P}_q(n)$ for all prime powers $q$.
\begin{theorem}
	\label{T9}
	An equidistant linear code $\mathcal{U} \subseteq \mathbb{P}_q(n)$ with constant distance of $2d$ and size $|\mathcal{U}|= 2^m$ exists if and only if there exists a $d$-intersecting uniform family $\mathcal{F} \subseteq \mathbb{G}_q(n, 2d)$ of size $|\mathcal{F}| = 2^m -1$.
\end{theorem}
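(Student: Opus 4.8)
The plan is to prove the two directions separately, with the forward implication being essentially a repackaging of Theorem~\ref{T7} and the reverse implication carrying all the constructive content. For the forward direction I would start from an equidistant linear code $\mathcal{U}$ with constant distance $2d$ and $|\mathcal{U}| = 2^m$, and simply take $\mathcal{F} := \mathcal{U}^{*} = \mathcal{U} \setminus \{\{0\}\}$. By the equivalence of (a), (b), (c) in Theorem~\ref{T7}, the set $\mathcal{U}^{*}$ lies in $\mathbb{G}_q(n, 2d)$ and is $d$-intersecting, hence is a uniform $d$-intersecting family; and $|\mathcal{F}| = |\mathcal{U}| - 1 = 2^m - 1$. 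So this direction needs no additional work beyond citing Theorem~\ref{T7}.

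For the reverse direction, suppose $\mathcal{F} \subseteq \mathbb{G}_q(n, 2d)$ is a uniform $d$-intersecting family with $|\mathcal{F}| = 2^m - 1$. I would set $\mathcal{U} := \mathcal{F} \cup \{\{0\}\}$, so that $|\mathcal{U}| = 2^m$ and $\mathcal{U}^{*} = \mathcal{F}$. The first observation I would record is that $\mathcal{U}$, regarded purely as a subset of the metric space $(\mathbb{P}_q(n), d_S)$, is \emph{already} equidistant with constant distance $2d$, before any algebraic structure is imposed: for distinct $X, Y \in \mathcal{F}$ we have $d_S(X, Y) = \dim X + \dim Y - 2\dim(X \cap Y) = 2d + 2d - 2d = 2d$, while $d_S(X, \{0\}) = \dim X = 2d$ for each $X \in \mathcal{F}$. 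What remains is to equip $\mathcal{U}$ with an addition $\boxplus$ satisfying the four conditions of Definition~\ref{L}.

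To define $\boxplus$ I would invoke the binary Steiner triple system carried by the geometry $PG(m-1,2)$, whose $2^m - 1$ points are the nonzero vectors of $\mathbb{F}_2^m$ and whose blocks are the triples $\{a, b, a+b\}$; since $2^m - 1 \equiv 1$ or $3 \pmod 6$, Theorem~\ref{T3} confirms a Steiner triple system on $2^m-1$ points exists. Fixing any bijection $a \mapsto X_a$ between $\mathbb{F}_2^m \setminus \{\mathbf{0}\}$ and $\mathcal{F}$, I would declare $X_a \boxplus X_b := X_{a+b}$ for $a \ne b$, together with $X_a \boxplus X_a := \{0\}$ and $X_a \boxplus \{0\} := X_a$; equivalently, the unique block $\{a, b, a+b\}$ through $X_a$ and $X_b$ names their sum. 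This transports the group law of $(\mathbb{F}_2^m, +)$ onto $\mathcal{U}$, so $(\mathcal{U}, \boxplus)$ is an abelian group with identity $\{0\}$ in which every element is its own inverse, giving conditions (i)--(iii) directly; closure into $\mathcal{U}$ is immediate since $a \ne b$ over $\mathbb{F}_2$ forces $a + b \ne \mathbf{0}$, whence $X_{a+b} \in \mathcal{F}$.

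The only surviving requirement is isometry, Definition~\ref{L}(iv), and this is the step I expect to look like the genuine obstacle, since the failure of translation invariance of $d_S$ is exactly what makes linearity in $\mathbb{P}_q(n)$ subtle. The key point I would emphasize is that equidistance renders (iv) automatic, decoupling it entirely from the choice of group law. For any $X, Y_1, Y_2 \in \mathcal{U}$: if $Y_1 = Y_2$ both sides vanish; if $Y_1 \ne Y_2$, group cancellation gives $Y_1 \boxplus X \ne Y_2 \boxplus X$, so $Y_1 \boxplus X$ and $Y_2 \boxplus X$ are two \emph{distinct} elements of the equidistant set $\mathcal{U}$, whence $d_S(Y_1 \boxplus X, Y_2 \boxplus X) = 2d = d_S(Y_1, Y_2)$. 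Thus $\boxplus$ is isometric, $\mathcal{U}$ is a linear code by Definition~\ref{L}, and being equidistant of distance $2d$ and size $2^m$ completes the construction (with Lemma~\ref{L1} then holding as a consequence). I would close by observing that this is precisely the promised decoupling of translation invariance from constant distance: once the uniform $d$-intersecting family is in hand, the entire difficulty of condition (iv) dissolves, leaving only the purely group-theoretic task discharged by the Steiner triple system.
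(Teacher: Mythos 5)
Your proof is correct, and although it shares the paper's skeleton --- forward direction by citing Theorem~\ref{T7}, reverse direction by adjoining $\{0\}$ to $\mathcal{F}$ and defining $\boxplus$ through a Steiner triple system --- it departs from the paper at the one step that carries real weight, and the departure is an improvement. The paper invokes Theorem~\ref{T3} to pick an \emph{arbitrary} $STS(2^m-1)$ on the index set and asserts that checking Definition~\ref{L} for the induced operation $\sigma$ is straightforward. That assertion is problematic: the operation induced by a Steiner triple system (its so-called Steiner loop) is always commutative, has identity $\{0\}$, and makes every element self-inverse, but it need not be \emph{associative}; it is a group exactly when the triple system is the point-line design of a binary projective geometry, i.e., isomorphic to the system whose blocks are the triples $\{a, b, a+b\}$ of distinct nonzero vectors of $\mathbb{F}_2^m$ --- which is precisely the system you use. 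For $m \ge 4$ other systems exist (of the $80$ pairwise nonisomorphic Steiner triple systems on $15$ points, only one is of this form), so condition (i) of Definition~\ref{L} genuinely fails for the paper's construction under a bad choice of $STS(2^m-1)$. By transporting the group law of $(\mathbb{F}_2^m, +)$ through a bijection you get associativity for free and close this gap. Likewise, your cancellation argument for condition (iv) --- distinct codewords remain distinct after adding $X$, and \emph{all} pairs of distinct elements of $\mathcal{U}$, including pairs containing $\{0\}$, are at distance $2d$ --- is exactly the verification the paper leaves implicit, and it is the correct formalization of the claim that equidistance decouples isometry from the choice of group law. One cosmetic point: your appeal to Theorem~\ref{T3} is superfluous, since you exhibit the needed triple system explicitly rather than relying on an existence theorem.
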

\begin{proof}
	(Proof of $\Rightarrow$) According to Theorem~\ref{T7}, $\mathcal{U^{*}}$ is the desired $d$-intersecting family in $\mathbb{G}_q(n, 2d)$. \\
	(Proof of $\Leftarrow$) Say, a $d$-intersecting family $\mathcal{F} \subseteq \mathbb{G}_q(n, 2d)$ of size $(2^m-1)$ exists. Enumerate the members of $\mathcal{F}$ as $\mathcal{F} = \{X_1, X_2, \ldots, X_{2^m-1}\}$. It is easy to check that $2^m-1 \equiv 1, 3 \mod6$ for all positive integers $m \in \mathbb{Z}^{+}$. Thus, Theorem~\ref{T3} suggests that a collection $T \subseteq \binom{[2^m-1]}{3}$ of $3$-subsets of $[2^m-1]$ exists such that $T$ is an $STS(2^m-1)$.
	
	We define a set $\mathcal{U} := \mathcal{F} \cup \{\{0\}\}$ and a function $\sigma$ on $\mathcal{U}$ as:
	\begin{align}
	\sigma : \mathcal{U} \times \mathcal{U} &\longrightarrow \mathcal{U} \nonumber \\
	(Z, \{0\}), (\{0\}, Z) &\longmapsto Z \quad &\forall Z \in \mathcal{U} \nonumber \\
	(Z, Z) &\longmapsto \{0\} \quad &\forall Z \in \mathcal{U} \nonumber \\
	(X_i, X_j) &\longmapsto X_l \quad &\Longleftrightarrow \{i, j, l\} \in T.
	\label{E2}
	\end{align}
	Since $\{i, j, l\} \in T$ containing $\{i, j\}$ is unique for a fixed $\{i, j\} \in \binom{[2^m-1]}{2}$ by definition, mapping in \eqref{E2} is well defined, i.e. $\sigma$ is well defined. It is straightforward to verify that $\sigma$ is fit for acting as a linear addition on $\mathcal{U}$ (Definition~\ref{D1}). Hence, $\mathcal{U} \subseteq \mathbb{P}_q(n)$ is a linear code which is equidistant with constant distance $2d$ since $\mathcal{U^{*}} = \mathcal{F}$ is $d$-intersecting. As $|\mathcal{U}| = |\mathcal{F}| + 1 = 2^m$, the result follows. 
\end{proof}

\begin{corollary}
	\label{C2}
	Let $\mathcal{M}_{q,d}$ be the largest size of a $d$-intersecting family in $\mathbb{G}_q(n, 2d)$ for all $1 \le d \le \lfloor \frac{n}{2}\rfloor$. Suppose $\mathcal{U}$ is an equidistant linear code in $\mathbb{P}_q(n)$; if $\mathcal{U}$ has a constant distance of $2d$ then $|\mathcal{U}| \le 2^{\lfloor \log_2 (\mathcal{M}_{q,d} + 1)\rfloor}$. In general, $$|\mathcal{U}| \le 2^{\lfloor \log_2 (\mathcal{M}_q + 1)\rfloor},$$ where $\mathcal{M}_q := \max\limits_{1 \le d \le \lfloor n/2\rfloor} \mathcal{M}_{q,d}$.
\end{corollary}
\begin{proof}
	That $|\mathcal{U}| \le 2^{\lfloor \log_2 (\mathcal{M}_d + 1)\rfloor}$ if $2d$ is the constant distance of $\mathcal{U}$ follows directly from Theorem~\ref{T9}. The rest is proved by taking the maximum over all possible values of $d$. 
\end{proof}

The problem of determining the maximum size of an equidistant linear code in $\mathbb{P}_q(n)$ with constant distance of $2d$ is, therefore, equivalent to the problem of finding the largest $d$-intersecting family in $\mathbb{G}_q(n, 2d)$. In essence Theorem~\ref{T9} implies that the additional constraint of constant distance on linear codes in projective spaces decouples the translation invariance on the subspace distance metric. One final observation follows as a consequence of Theorem~\ref{T9}.
\begin{remark}
	\label{R4}
	A linear addition operation $\boxplus$ on an equidistant linear code $\mathcal{U} \subseteq \mathbb{P}_q(n)$ with size $|\mathcal{U}| = 2^s$ can be defined in exactly $N_s$ different ways, where $N_s$ represents the number of all possible distinct Steiner triple systems defined on the finite set $[2^s - 1]$ while separately counting all isomorphic systems. It must be noted that $N_s$ is independent of the value of $q$. 
\end{remark}
\subsection{Equidistant Linear Codes in $\mathbb{P}_q(n)$ of Size Greater than $2^n$}
\label{sec:6b}
We will henceforth use the following notations: $\mathbb{E}_q(n)$ and $\mathbb{E}_q(n,d)$ will represent the maximum size of an equidistant linear code in $\mathbb{P}_q(n)$ and the same when the constant distance is $2d$, respectively. Since the size of largest $d$-intersecting families in $\mathbb{G}_q(n, 2d)$ is yet unknown for $d > 1$, Corollary~\ref{C2} suggests that hitherto there is no recognized closed-form expression for $\mathbb{E}_q(n,d)$ or $\mathbb{E}_q(n)$ in general. However, we assert the following for $n=3$.
\begin{proposition}
	\label{P3}
	$\mathbb{E}_q(3) = 2^{\lfloor \log_2(q^2+q+2)\rfloor}$.
\end{proposition}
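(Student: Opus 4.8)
The plan is to specialize the machinery already developed—Theorem~\ref{T7}, Corollary~\ref{C2}, and Theorem~\ref{T9}—to the case $n=3$, where the intersection condition becomes essentially vacuous. First I would note that for $n=3$ the constraint $1 \le d \le \lfloor 3/2\rfloor$ forces $d=1$, so the only admissible constant distance is $2d=2$. Hence, by Theorem~\ref{T7}, any equidistant linear code $\mathcal{U} \subseteq \mathbb{P}_q(3)$ has $\mathcal{U}^{*} \subseteq \mathbb{G}_q(3,2)$ and $\mathcal{U}^{*}$ is a $1$-intersecting uniform family. The key simplifying observation is that any two distinct $2$-dimensional subspaces of $\mathbb{F}_q^3$ automatically meet in a line: if $X \ne Y$ lie in $\mathbb{G}_q(3,2)$ then $\dim(X+Y)=3$, so $\dim(X \cap Y)=2+2-3=1$. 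Thus \emph{every} subset of $\mathbb{G}_q(3,2)$ is $1$-intersecting and uniform, and the largest $1$-intersecting family is all of $\mathbb{G}_q(3,2)$, of size $\mathcal{M}_{q,1}={3 \brack 2}_q = q^2+q+1$. Since $d=1$ is the only option, $\mathcal{M}_q=\mathcal{M}_{q,1}=q^2+q+1$.

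For the upper bound, I would recall that a linear code is an elementary abelian $2$-group (it is a vector space over $\mathbb{F}_2$ by the self-inversion axiom in Definition~\ref{L}), so $|\mathcal{U}|=2^s$ for some $s$. Then $|\mathcal{U}^{*}|=2^s-1 \le \mathcal{M}_{q,1}=q^2+q+1$, which gives $2^s \le q^2+q+2$ and therefore $s \le \lfloor \log_2(q^2+q+2)\rfloor$. This yields $\mathbb{E}_q(3) \le 2^{\lfloor \log_2(q^2+q+2)\rfloor}$; it is exactly Corollary~\ref{C2} read off with $\mathcal{M}_q=q^2+q+1$.

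For the matching lower bound I would exhibit a code of the claimed size via Theorem~\ref{T9}. Set $m := \lfloor \log_2(q^2+q+2)\rfloor$, so that $2^m \le q^2+q+2$ and hence $2^m-1 \le q^2+q+1 = |\mathbb{G}_q(3,2)|$. I can then simply select any $2^m-1$ members of $\mathbb{G}_q(3,2)$; by the observation above this is automatically a $1$-intersecting uniform family $\mathcal{F}$ of the required size. Theorem~\ref{T9} now delivers an equidistant linear code $\mathcal{U}=\mathcal{F}\cup\{\{0\}\}$ of size $2^m$ and constant distance $2$, using that $STS(2^m-1)$ exists because $2^m-1 \equiv 1,3 \pmod 6$ for every $m \ge 1$. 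Combining the two inequalities gives $\mathbb{E}_q(3)=2^{\lfloor \log_2(q^2+q+2)\rfloor}$.

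There is essentially no analytic obstacle here, precisely because the $1$-intersecting requirement is automatic in $\mathbb{G}_q(3,2)$; the only genuine subtlety is that the code size must be a power of $2$, which is what forces the floor-of-logarithm expression rather than the naive value $q^2+q+2$. The single point needing care is checking that the target size $2^m-1$ is simultaneously (i) no larger than the number of available planes and (ii) realizable as a Steiner triple system so that Theorem~\ref{T9} applies—and both hold for every $m$ by the choice of $m$ and the congruence $2^m-1\equiv 1,3\pmod 6$.
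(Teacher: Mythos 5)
Your proof is correct and follows essentially the same route as the paper: the paper's (two-line) proof also observes that $\mathbb{G}_q(3,2)$ is automatically $1$-intersecting of size ${3 \brack 2}_q = q^2+q+1$ and then invokes Theorem~\ref{T9} to get both bounds, exactly the machinery you use. Your write-up merely makes explicit the steps the paper leaves implicit (that $d=1$ is forced when $n=3$, that the code size is a power of $2$, and that one passes to a subfamily of size $2^m-1$ before applying Theorem~\ref{T9}).
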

\begin{proof}
	The family $\mathcal{F} = \mathbb{G}_q(3, 2)$ is $1$-intersecting. Since $|\mathbb{G}_q(3, 2)| = {3 \brack 2}_q = q^2+q+1$, Theorem~\ref{T9} implies the rest. 
\end{proof}

Braun et al. conjectured the largest size of a linear code in $\mathbb{P}_2(n)$ to be $2^n$ \cite{BEV}. The existing examples of linear codes in $\mathbb{P}_q(n)$ are all of size at most $2^n$ \cite{BEV, PS, BK2}. Proposition~\ref{P3} argues that equidistant linear codes of size more than $2^n$ exist in $\mathbb{P}_q(n)$ for all prime powers $q > 3$. Table~\ref{tab:1} illustrates this point. It must be observed that the value of $\mathbb{E}_q(3)$ increases monotonically with $q$, the order of the underlying field.

\begin{table}[h!]
	\begin{center}
		\caption{Size of largest equidistant linear codes in $\mathbb{P}_q(3)$}
		\label{tab:1}       
		\begin{tabular}{c|c|c}
			\hline\noalign{}
			\textbf{Order of the field} & \textbf{Size of maximal family} & \textbf{Maximum size of a code}\\
			$q$ & $|\mathbb{G}_q(3, 2)|=q^2+q+1$ & $\mathbb{E}_q(3) = 2^{\lfloor\log_2(q^2+q+2)\rfloor}$ \\
			\noalign{}\hline\noalign{}
			2 & 7 & 8 \\
			3 & 13 & 8 \\
			4 & 21 & 16\\
			5 & 31 & 32\\
			7 & 57 & 32\\
			8 & 73 & 64\\
			9 & 91 & 64\\
			11 & 133 & 128\\
			13 & 183 & 128\\
			16 & 273 & 256\\
			17 & 307 & 256\\
			\noalign{}\hline
		\end{tabular}
	\end{center}
\end{table}

Clearly, $\mathbb{E}_q(n) = \mathbb{E}_q(n, 1)$ for $n=3$ and $4$; however, $\mathbb{E}_q(n) \ge \mathbb{E}_q(n, 1)$ in general. We at this point recall the $q$-analog of Erd{\"o}s-Ko-Rado theorem to prove a lower bound on $\mathbb{E}_q(n)$.
\begin{proposition}
	\label{P4}
	For all $n \ge 4$, $\mathbb{E}_q(n, 1) = 2^{\lfloor \log_2({\frac{q^{n-1}-1}{q-1}} + 1)\rfloor} \le \mathbb{E}_q(n)$. Equidistant linear codes in $\mathbb{P}_q(n)$ attaining this bound can always be constructed from a sunflower in $\mathbb{G}_q(n, 2)$ for any $n \ge 4$ and prime power $q$.
\end{proposition}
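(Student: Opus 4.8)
The plan is to pin down the quantity $\mathcal{M}_{q,1}$ from Corollary~\ref{C2} via the $q$-analog of Erd\"{o}s--Ko--Rado and then match the resulting upper bound with an explicit construction built on Theorem~\ref{T9}. First I would note that constant distance $2$ corresponds to $d=1$, so by Theorem~\ref{T7} the nontrivial part $\mathcal{U}^{*}$ of such a code sits inside $\mathbb{G}_q(n, 2)$ and is $1$-intersecting. For $2$-dimensional subspaces this is precisely an intersecting family in the sense of Theorem~\ref{T6}, because two distinct members of $\mathbb{G}_q(n, 2)$ can meet in dimension at most $1$, so ``$1$-intersecting'' and ``intersecting'' coincide here. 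Since $n \ge 4$ guarantees $2k = 4 \le n$, Theorem~\ref{T6} applies and gives $\mathcal{M}_{q,1} = {{n-1} \brack {1}}_q = \frac{q^{n-1}-1}{q-1}$, with extremal families of type (a) being sunflowers centered at a fixed $1$-dimensional subspace $W$.

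Next I would feed $d=1$ into Corollary~\ref{C2} to obtain the upper bound $\mathbb{E}_q(n, 1) \le 2^{\lfloor \log_2(\mathcal{M}_{q,1} + 1)\rfloor} = 2^{\lfloor \log_2(\frac{q^{n-1}-1}{q-1} + 1)\rfloor}$. The accompanying inequality $\mathbb{E}_q(n, 1) \le \mathbb{E}_q(n)$ is immediate, since $\mathbb{E}_q(n)$ maximizes over equidistant linear codes of every admissible constant distance. It remains only to realize the upper bound with an actual code.

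For the matching lower bound I would set $m := \lfloor \log_2(\frac{q^{n-1}-1}{q-1} + 1)\rfloor$, so that $2^m - 1 \le \frac{q^{n-1}-1}{q-1}$. Starting from the maximum sunflower $\mathbb{S} = \{V \in \mathbb{G}_q(n, 2): W \subset V\}$, I would select any $2^m - 1$ of its members. Any subfamily of a sunflower is again a sunflower, hence a $1$-intersecting family $\mathcal{F} \subseteq \mathbb{G}_q(n, 2)$ of size exactly $2^m - 1$. Since $2^m - 1 \equiv 1, 3 \pmod{6}$, an $STS(2^m - 1)$ exists by Theorem~\ref{T3}, so the construction of Theorem~\ref{T9} turns $\mathcal{F}$ into an equidistant linear code $\mathcal{U} = \mathcal{F} \cup \{\{0\}\}$ of size $2^m$ and constant distance $2$. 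This meets the upper bound and yields the claimed equality, with $\mathcal{U}^{*}$ arising from a sunflower as asserted.

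The main obstacle is not the EKR input itself but the mismatch between the extremal family size $\frac{q^{n-1}-1}{q-1}$ and the admissible code sizes: since every linear code is a vector space over $\mathbb{F}_2$ (Definition~\ref{L}(iii)), its size must be a power of $2$, whereas $\frac{q^{n-1}-1}{q-1}+1$ need not be. The key device I would stress is that one does not use the entire extremal sunflower; trimming it down to exactly $2^m - 1$ subspaces preserves the $1$-intersecting property, and this is precisely the size Theorem~\ref{T9} requires. I would also check the boundary case $n = 4$, where $2k = n$ and Theorem~\ref{T6} admits a second extremal type, to confirm that the type-(a) sunflower still exists, so the construction goes through uniformly for all $n \ge 4$.
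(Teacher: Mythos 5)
Your proposal is correct and follows essentially the same route as the paper's own proof: the upper bound is obtained by feeding the $q$-analog of the Erd\"{o}s--Ko--Rado theorem (Theorem~\ref{T6}) into Corollary~\ref{C2}, and the matching lower bound comes from trimming a maximal sunflower in $\mathbb{G}_q(n,2)$ down to $2^m-1$ members, appending $\{0\}$, and equipping the result with the Steiner-triple-system addition of Theorem~\ref{T9}. Your additional observations --- that ``intersecting'' and ``$1$-intersecting'' coincide in $\mathbb{G}_q(n,2)$, that $n \ge 4$ is exactly what makes Theorem~\ref{T6} applicable, and that the $n=4$ boundary case still admits a type-(a) sunflower --- merely make explicit details the paper leaves implicit.
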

\begin{proof}
	The proof is by construction. Any intersecting family in $\mathbb{G}_q(n, 2)$ is $1$-intersecting by definition. Thus, by Theorem~\ref{T6} and Corollary~\ref{C2} any equidistant linear code $\mathcal{U} \subseteq \mathbb{P}_q(n)$ with constant distance of $2$ must obey the following:
	\begin{equation}
	\label{E3}
	|\mathcal{U}| \le 2^{\lfloor \log_2({{n-1} \brack 1}_q + 1)\rfloor}.
	\end{equation}
	Moreover, there always exists a $1$-intersecting sunflower $\mathbb{S}$ in $\mathbb{G}_q(n, 2)$ with size $|\mathbb{S}| = {{n-1} \brack 1}_q = \frac{q^{n-1}-1}{q-1}$. Any optimal equidistant linear code $\mathcal{U} := {\tilde{\mathbb{S}}} \cup \{\{0\}\}$ constructed from $\mathbb{S}$ for some $\tilde{\mathbb{S}} \subseteq \mathbb{S}$ with size $|\tilde{\mathbb{S}}| = 2^{\lfloor \log_2({{n-1} \brack 1}_q + 1)\rfloor} - 1$ attains the bound depicted in \eqref{E3}, proving the statement. 
\end{proof}
\begin{remark}
	\label{R5}
	For $n=4$ there is a known non-sunflower construction: $\mathcal{U}:= \tilde{\mathbb{S}} \cup \{\{0\}\}$, where $\tilde{\mathbb{S}}$ can be any subset of $\{Y \in \mathbb{G}_q(4, 2): Y \subset H \in \mathbb{G}_q(4, 3)\}$ with the desired size.
\end{remark}

It is easy to verify from \eqref{E3} that $\mathbb{E}_3(n, 1) \ge 2^{n+1}$ for all $n \ge 7$ (Table~\ref{tab:2}). Combining Proposition~\ref{P4} with Proposition~\ref{P3} leads us to the following conclusion.
\begin{corollary}
	\label{C3}
	For any prime power $q > 2$ there exist equidistant linear codes in $\mathbb{P}_q(n)$ with size at least $2^{n+1}$ for some $n \ge 3$.
\end{corollary}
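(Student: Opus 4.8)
The plan is to reduce the corollary to a pair of explicit numerical inequalities, since all of the structural content has already been packaged into the two preceding propositions: Proposition~\ref{P3} gives the exact value of $\mathbb{E}_q(3)$, and Proposition~\ref{P4} gives a constructive sunflower lower bound on $\mathbb{E}_q(n)$ for $n \ge 4$. Because $q$ is a prime power with $q > 2$, we have $q \ge 3$, so it suffices to exhibit, for each such $q$, a single length $n \ge 3$ with $\mathbb{E}_q(n) \ge 2^{n+1}$. I would split into the two cases $q \ge 4$ and $q = 3$, combining the two propositions exactly as the surrounding text indicates.

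For the case $q \ge 4$ the natural witness is $n = 3$. Here Proposition~\ref{P3} gives $\mathbb{E}_q(3) = 2^{\lfloor \log_2(q^2+q+2)\rfloor}$, so the target $\mathbb{E}_q(3) \ge 2^{3+1} = 16$ is equivalent to $\lfloor \log_2(q^2+q+2)\rfloor \ge 4$, i.e. $q^2 + q + 2 \ge 16$. This holds for every $q \ge 4$, since $q^2 + q$ is increasing in $q$ and already equals $20$ at $q = 4$. Thus the corollary is settled at $n = 3$ for all prime powers $q \ge 4$.

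The case $q = 3$ is the tight one and is where a little more length is needed: Proposition~\ref{P3} gives $\mathbb{E}_3(3) = 2^{\lfloor \log_2 14\rfloor} = 2^3$, which falls short of $2^4$, so $n = 3$ does not suffice and one must climb higher. For this I would invoke Proposition~\ref{P4}, giving $\mathbb{E}_3(n) \ge \mathbb{E}_3(n,1) = 2^{\lfloor \log_2({{n-1} \brack 1}_3 + 1)\rfloor}$. Choosing $n = 7$ and computing ${{6} \brack 1}_3 + 1 = \frac{3^6 - 1}{2} + 1 = 365 \ge 256 = 2^8$, the floor of the base-$2$ logarithm is at least $8$, whence $\mathbb{E}_3(7) \ge 2^8 = 2^{7+1}$. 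Together with the first case this yields the claim for every prime power $q > 2$.

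I do not anticipate a genuine obstacle, as the analytic burden has been front-loaded into Propositions~\ref{P3} and~\ref{P4}; the only point demanding care is that the witnessing length $n$ cannot be taken uniformly. A single $n$ that works for large $q$ (namely $n = 3$) fails for the smallest admissible field $q = 3$, precisely because ${{n-1} \brack 1}_q$ grows only like $q^{\,n-2}$ while the target $2^{n+1}$ grows geometrically in $n$, so at $q = 3$ one must reach $n = 7$ before the inequality turns. The key conceptual step is to recognize that the corollary asks only for existence of \emph{some} $n$, so it is legitimate to handle the worst case $q = 3$ separately at $n = 7$ while disposing of all larger fields at $n = 3$.
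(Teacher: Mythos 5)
Your proposal is correct and follows essentially the same route as the paper: the paper also combines Proposition~\ref{P3} (handling prime powers $q \ge 4$ at $n=3$, as illustrated in Table~\ref{tab:1}) with Proposition~\ref{P4} (handling $q=3$ via $\mathbb{E}_3(n,1) \ge 2^{n+1}$ for $n \ge 7$, as in Table~\ref{tab:2}). Your explicit verification at $n=7$, ${6 \brack 1}_3 + 1 = 365 \ge 2^8$, is exactly the computation underlying the paper's claim.
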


\begin{table}[h!]
	\begin{center}
		\caption{Size of largest equidistant linear codes in $\mathbb{P}_3(n)$ with distance $2$}
		\label{tab:2}       
		\begin{tabular}{c|c|c}
			\hline\noalign{}
			\textbf{Dimension} & \textbf{Maximum size of a code} & \textbf{Size of coding space}\\
			$n$ & $\mathbb{E}_3(n, 1)$ & $|\mathbb{P}_3(n)|$ \\
			\noalign{}\hline\noalign{}
			3 & 8 & 28 \\
			4 & 8 & 212\\
			5 & 32 & 2664\\
			6 & 64 & 56632\\
			7 & 256 & 2052656\\
			8 & 1024 & 127902864\\
			\noalign{}\hline
		\end{tabular}
	\end{center}
\end{table}

The maximum size of a class of codes indicates how much of the entire coding space can be utilized by that scheme. For a code in $\mathbb{P}_q(3)$, size of the entire coding space is $|\mathbb{P}_q(3)| = \sum\limits_{i=0}^{3} |\mathbb{G}_q(3,i)| = 2(q^2+q+2)$, while according to Proposition~\ref{P3} the maximum size of an equidistant linear code in $\mathbb{P}_q(3)$ is at most $(q^2+q+2)$; thus, these codes can use at most half of the coding space. We will now establish that such class of linear codes are also guaranteed to exploit a certain minimum fraction of the underlying space. A result proved in \cite{RN} that was conjectured by Ramanujan is listed first.
\begin{theorem}[Ramanujan-Nagell's Theorem]
	\label{T10}
	A positive integer $x \in \mathbb{Z}^{+}$ satisfies the equation $x^2+7 = 2^m$ for some positive integer $m \in \mathbb{Z}^{+}$ if and only if $x = 1, 3, 5, 11$ or $181$.
\end{theorem}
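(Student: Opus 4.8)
The plan is to treat the two directions separately. The \emph{if} direction is a finite verification: for $x \in \{1,3,5,11,181\}$ one checks $x^2 + 7 \in \{8,16,32,128,32768\} = \{2^3,2^4,2^5,2^7,2^{15}\}$, so each such $x$ is of the required form. The substance lies in the \emph{only if} direction, which I would attack by algebraic number theory in the imaginary quadratic field $K = \mathbb{Q}(\sqrt{-7})$. First I would dispose of small $m$: the cases $m=1,2$ force $x^2 \in \{-5,-3\}$, which is impossible, so $m \geq 3$; and since $2^m$ is even while $7$ is odd, $x$ must be odd.

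Next I would pass to the ring of integers $\mathcal{O} = \mathbb{Z}\left[\frac{1+\sqrt{-7}}{2}\right]$ of $K$. The crucial structural facts are that $\mathcal{O}$ is a principal ideal domain (the class number of $K$ is $1$) and that its only units are $\pm 1$. Writing $\pi = \frac{1+\sqrt{-7}}{2}$ and $\bar\pi = \frac{1-\sqrt{-7}}{2}$, one has $\pi + \bar\pi = 1$ and $\pi\bar\pi = 2$, so $2 = \pi\bar\pi$ is the factorization of the rational prime $2$ in $\mathcal{O}$. Because $x$ is odd, $\alpha := \frac{x+\sqrt{-7}}{2}$ lies in $\mathcal{O}$, and $\alpha\bar\alpha = \frac{x^2+7}{4} = 2^{m-2} = \pi^{m-2}\bar\pi^{m-2}$. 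Any common divisor of $\alpha$ and $\bar\alpha$ divides their difference $\sqrt{-7}$ (of norm $7$) as well as the power of $2$, and $\gcd(7,2)=1$, so the two factors are coprime. Unique factorization then forces $\alpha = \pm\,\pi^{m-2}$ (after possibly swapping $\pi,\bar\pi$), whence subtracting the conjugate equation yields $\pi^{m-2}-\bar\pi^{m-2} = \pm\sqrt{-7} = \pm(\pi-\bar\pi)$. Setting $k = m-2$ and $y_k := (\pi^k-\bar\pi^k)/(\pi-\bar\pi)$, which is an integer satisfying the Lucas-type recurrence $y_k = y_{k-1} - 2y_{k-2}$ with $y_0 = 0,\ y_1 = 1$, the whole problem collapses to the single exponential equation $y_k = \pm 1$.

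The hard part is showing that $y_k = \pm 1$ holds for exactly $k \in \{1,2,3,5,13\}$ (these recover $x = 1,3,5,11,181$ via $m = k+2$). I would carry this out by Skolem's $2$-adic method. Since $-7 \equiv 1 \pmod 8$, $\sqrt{-7}$ lies in $\mathbb{Z}_2$, so $K$ embeds into $\mathbb{Q}_2$ and both $\pi,\bar\pi \in \mathbb{Z}_2$; from $\pi+\bar\pi=1$ and $\pi\bar\pi=2$ exactly one is a $2$-adic unit, say $v_2(\pi)=0$ and $v_2(\bar\pi)=1$. Fixing a residue class $k \equiv a \pmod N$ for a suitable period $N$ (so that $\pi^N$ is a principal $1$-unit), I would write $k = a+Nt$ and expand $\pi^k = \pi^a(\pi^N)^t$ and $\bar\pi^k = \bar\pi^a(\bar\pi^N)^t$ using the $2$-adic exponential; because $v_2(\bar\pi^{N})$ is large and $\pi^N$ is a $1$-unit, the equation $\pi^k-\bar\pi^k \mp \sqrt{-7}=0$ becomes $f_a(t)=0$ for a convergent $2$-adic power series $f_a$ in $t \in \mathbb{Z}_2$. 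Strassmann's theorem then bounds the number of zeros of $f_a$, and with a careful choice of $N$ this bound is $1$ in each relevant class, so each residue class contains at most one admissible $k$; a finite check of the classes, sharpened by congruences modulo small powers of $2$ and $7$, then pins down precisely the five values.

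I expect the genuine obstacle to be exactly this last step. The algebraic reduction is clean once one knows that $K$ has class number one and unit group $\{\pm1\}$, but extracting the complete list of solutions to $y_k = \pm 1$ is delicate: a naive period $N$ gives a Strassmann bound larger than $1$, and choosing $N$ so that the bound drops to $1$—while still keeping the residue classes enumerable—is where the real effort lies. As an alternative I would note that the modern primitive-divisor theorem of Bilu, Hanrot and Voutier for Lucas sequences applies directly: the terms $y_k$ form a Lucas sequence, $y_k = \pm 1$ means $y_k$ has no primitive prime divisor, and their classification forces $k$ to be small, after which a finite verification completes the argument. Either route reduces the theorem to a bounded, explicit computation.
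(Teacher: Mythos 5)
First, a point of reference: the paper does not prove this statement at all. Theorem~\ref{T10} is quoted as a classical result, with the proof delegated entirely to the citation \cite{RN} (Nagell). So your proposal cannot diverge from ``the paper's proof''; it has to stand on its own as a proof of the Ramanujan--Nagell theorem. Its first half does: the finite verification of the five values, the elimination of $m \le 2$ and of even $x$, the passage to $\mathcal{O} = \mathbb{Z}\left[\frac{1+\sqrt{-7}}{2}\right]$ (class number one, units $\pm 1$), the coprimality of $\alpha$ and $\bar\alpha$, the conclusion $\alpha = \pm\pi^{m-2}$, and the reduction to the Lucas equation $y_k = \pm 1$ with $y_k = y_{k-1} - 2y_{k-2}$, $y_0 = 0$, $y_1 = 1$, whose solutions $k \in \{1,2,3,5,13\}$ correspond exactly to $x \in \{1,3,5,11,181\}$ --- all of this is correct and is the standard opening of every known proof.

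The gap is in the Strassmann step, and it is not merely the ``choice of $N$'' difficulty you flag. Under the embedding $K \hookrightarrow \mathbb{Q}_2$ with $v_2(\pi) = 0$ and $v_2(\bar\pi) = 1$, the element $\bar\pi^N$ is not a $1$-unit --- it is not a unit at all --- so $(\bar\pi^N)^t$ has no meaning for $t \in \mathbb{Z}_2$: $\log(\bar\pi^N)$ does not exist, and the binomial series for $\left(1 + (\bar\pi^N - 1)\right)^t$ diverges because $|\bar\pi^N - 1|_2 = 1$. Hence $\pi^{a+Nt} - \bar\pi^{a+Nt} \mp \sqrt{-7}$ is \emph{not} a convergent $2$-adic power series in $t$, and Strassmann's theorem cannot be applied in the form you describe; nor may you discard the $\bar\pi^k$ term on the grounds that its valuation is large, since the equation is exact. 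This is precisely why the classical $2$-adic treatments take a detour. One standard repair: if $k_1 < k_2$ both solve $y_k = \epsilon$ for the same sign $\epsilon$, then $\pi^{k_1}(\pi^{k_2-k_1}-1) = \bar\pi^{k_1}(\bar\pi^{k_2-k_1}-1)$, and comparing valuations (the right side has valuation exactly $k_1$, since $\bar\pi^{k_2-k_1}-1$ is a unit) gives $v_2(\pi^{k_2-k_1}-1) = k_1$; the structure of the $1$-units of $\mathbb{Z}_2$ then forces $k_2 - k_1$ to be divisible by roughly $2^{k_1}$, and a finite check of the sequence modulo small powers of $2$ completes the classification. Alternatively, Skolem's method can be run honestly at an odd prime at which both roots are units. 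Your fallback via the Bilu--Hanrot--Voutier primitive divisor theorem, by contrast, is a complete and correct route as stated ($y_k = \pm1$ has no primitive divisor, BHV excludes $k > 30$, and $k \le 30$ is a finite check), though it rests on a theorem far deeper than the statement being proved.
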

\begin{lemma}
	\label{L9}
	For all prime powers $q$,
	\begin{equation*}
	\frac{1}{4} < \frac{\mathbb{E}_q(3)}{|\mathbb{P}_q(3)|} \le \frac{1}{2}.
	\end{equation*}
	The upper bound is reached if and only if $q = 2$ or $5$.
\end{lemma}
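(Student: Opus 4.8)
The plan is to reduce the entire statement to elementary properties of the floor of a base-$2$ logarithm, followed by a single application of the Ramanujan--Nagell theorem (Theorem~\ref{T10}). First I would substitute the two closed forms already at hand: by Proposition~\ref{P3} we have $\mathbb{E}_q(3) = 2^{\lfloor \log_2 N\rfloor}$ where $N := q^2+q+2$, and the direct count $|\mathbb{P}_q(3)| = \sum_{i=0}^{3}{3 \brack i}_q = 2(q^2+q+2) = 2N$ recorded just before the lemma. Hence the quantity to be estimated is simply
\[
\frac{\mathbb{E}_q(3)}{|\mathbb{P}_q(3)|} = \frac{2^{\lfloor \log_2 N\rfloor}}{2N}.
\]

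Next I would observe that both inequalities are nothing more than the defining property of $\lfloor \log_2 N\rfloor$. Writing $k := \lfloor \log_2 N\rfloor$, the definition yields $2^{k} \le N < 2^{k+1}$, the right-hand inequality being strict (equality $N = 2^{k+1}$ would force $\lfloor \log_2 N\rfloor = k+1$). Dividing the left inequality by $2N$ gives $2^k/(2N) \le \tfrac12$, and dividing the strict inequality $N < 2\cdot 2^k$ by $2N$ gives $\tfrac14 < 2^k/(2N)$. This settles the chain of inequalities for every prime power $q$, and at the same time identifies the equality case: the upper bound $\tfrac12$ is attained precisely when $2^k = N$, that is, exactly when $N = q^2+q+2$ is itself a power of $2$.

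It then remains to determine which prime powers $q$ make $q^2+q+2$ a power of $2$, and this is where the arithmetic content lies and the step I expect to be the crux. The key manoeuvre is to clear denominators and complete the square so as to land in the exact form of the Ramanujan--Nagell equation: multiplying $q^2+q+2 = 2^m$ by $4$ gives $(2q+1)^2 + 7 = 2^{m+2}$, so the substitution $x := 2q+1$ reduces the problem to $x^2 + 7 = 2^{m+2}$. Theorem~\ref{T10} then forces $x \in \{1,3,5,11,181\}$, whence $2q+1 \in \{1,3,5,11,181\}$ and $q \in \{0,1,2,5,90\}$.

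Finally I would discard the inadmissible values. Since $q$ must be a prime power, this rules out $q=0$, $q=1$, and $q=90$ (as $90 = 2\cdot 3^2\cdot 5$), leaving exactly $q=2$ (giving $q^2+q+2 = 8 = 2^3$) and $q=5$ (giving $q^2+q+2 = 32 = 2^5$). Hence the upper bound is reached if and only if $q=2$ or $q=5$, which completes the argument. The only genuine obstacle is recognising the completing-the-square substitution $x = 2q+1$ that transforms $q^2+q+2 = 2^m$ into the Ramanujan--Nagell shape $x^2+7 = 2^{m+2}$; once that identity is in place, every other step is routine.
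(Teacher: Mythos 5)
Your proof is correct and follows essentially the same route as the paper's: both reduce the two bounds to the defining inequality $2^k \le q^2+q+2 < 2^{k+1}$ for $k = \lfloor \log_2(q^2+q+2)\rfloor$ via Proposition~\ref{P3}, and both settle the equality case by the Ramanujan--Nagell theorem. The only difference is that you spell out the completing-the-square substitution $x = 2q+1$ turning $q^2+q+2 = 2^m$ into $x^2+7 = 2^{m+2}$ and the elimination of $q \in \{0,1,90\}$, steps the paper's proof leaves implicit.
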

\begin{proof}
	There exists $m \in \mathbb{Z}^{+}$ for a fixed $q$ such that $2^{m-1} \le q^2+q+2 < 2^m$. Since $|\mathbb{P}_q(3)| = 2(q^2+q+2)$, we can say that $2^m \le |\mathbb{P}_q(3)| < 2^{m+1}$. Thus, both the bounds follow from Proposition~\ref{P3}. Equality on the upper bound is attained if and only if $q^2+q+2=2^{m-1}$. Theorem~\ref{T10} dictates that the only permissible solutions for $q$ are $2$ and $5$. 
\end{proof}

Proceeding in the same direction for higher values of $n$, we discover that any equidistant linear code in a projective space is able to exploit not more than half of the coding space. We make use of a few identities: for all prime powers $q$, ${n \brack {k-1}}_q < {n \brack k}_q$ for all $0 \le k <\lfloor \frac{n}{2}\rfloor$; and ${n \brack j}_q = {n \brack {n-j}}_q$ for all $0 \le j \le \lfloor \frac{n}{2}\rfloor$.
\begin{lemma}
	\label{L10}
	For all prime powers $q$, $\frac{\mathbb{E}_q(n)}{|\mathbb{P}_q(n)|} \le \frac{1}{2}$. Equality occurs only if $n=3$ and $q=2$ or $5$.
\end{lemma}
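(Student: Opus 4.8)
The plan is to fix an equidistant linear code $\mathcal{U} \subseteq \mathbb{P}_q(n)$ of maximum size $\mathbb{E}_q(n)$ and let $2d$ be its constant distance. By Theorem~\ref{T7} its nontrivial part $\mathcal{U}^{*}$ is a $d$-intersecting family inside the single Grassmannian $\mathbb{G}_q(n, 2d)$, so trivially $|\mathcal{U}| = |\mathcal{U}^{*}| + 1 \le {n \brack 2d}_q + 1$. Writing $|\mathbb{P}_q(n)| = \sum_{i=0}^{n} {n \brack i}_q$ and recalling the two identities highlighted just before the statement, namely the symmetry ${n \brack i}_q = {n \brack n-i}_q$ and the unimodality of the Gaussian coefficients, the whole lemma reduces to proving $2|\mathcal{U}| \le |\mathbb{P}_q(n)|$ together with an analysis of when equality can hold. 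First I would dispose of the degenerate case $2d = n$, where $\mathbb{G}_q(n, n)$ is a singleton and $|\mathcal{U}| \le 2$ sits far below the bound.

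For the generic case, in which $2d < n$ and $n \neq 4d$, the four indices $0,\, 2d,\, n-2d,\, n$ are pairwise distinct and all lie in $\{0, 1, \dots, n\}$. Keeping only these four summands of $|\mathbb{P}_q(n)|$ and using ${n \brack 2d}_q = {n \brack n-2d}_q$ gives
\begin{equation*}
|\mathbb{P}_q(n)| \ge {n \brack 0}_q + {n \brack 2d}_q + {n \brack n-2d}_q + {n \brack n}_q = 2{n \brack 2d}_q + 2 \ge 2|\mathcal{U}|.
\end{equation*}
The inequality is strict as soon as $\{0, 2d, n-2d, n\}$ fails to exhaust $\{0,\dots,n\}$, i.e. whenever $n + 1 > 4$; hence equality can survive only when $n = 3$. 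This argument is clean precisely because it never needs to touch the central Gaussian coefficient directly.

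The hard part is the critical case $n = 4d$, where $2d = n/2$ is exactly the central index and the crude bound $|\mathcal{U}^{*}| \le {n \brack 2d}_q$ would overshoot $\tfrac12|\mathbb{P}_q(n)|$ (already visible at $n=4$, $q=2$). Here I would exploit that a $d$-intersecting family with $d \ge 1$ is in particular an intersecting family of $\tfrac{n}{2}$-subspaces with $2 \cdot \tfrac{n}{2} = n$, so the $q$-analog of the Erd\"{o}s--Ko--Rado theorem (Theorem~\ref{T6}(b)) applies and yields $|\mathcal{U}^{*}| \le {n-1 \brack n/2 - 1}_q = {n \brack n/2}_q / (q^{n/2}+1)$. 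Since $q^{n/2} \ge 4$ and ${n \brack n/2}_q \ge 35$ for all $n \ge 4$, this forces $2|\mathcal{U}| \le 2{n \brack n/2}_q/(q^{n/2}+1) + 2 < {n \brack n/2}_q \le |\mathbb{P}_q(n)|$, a strict inequality. Thus for every $n \ge 4$ the ratio is strictly below $\tfrac12$, and equality can occur only at $n = 3$. To close the equality analysis at $n = 3$ I would invoke Proposition~\ref{P3} (giving $\mathbb{E}_q(3) = 2^{\lfloor \log_2(q^2+q+2)\rfloor}$) exactly as in Lemma~\ref{L9}: the ratio equals $\tfrac12$ iff $q^2 + q + 2$ is a power of two, which via the substitution $(2q+1)^2 + 7 = 2^{m+1}$ and the Ramanujan--Nagell Theorem~\ref{T10} happens precisely for $q = 2$ and $q = 5$. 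The main obstacle throughout is this single critical dimension $n = 4d$; every other dimension collapses to harmless unimodality bookkeeping, and it is only there that the intersecting structure, rather than the mere cardinality of the host Grassmannian, must be brought to bear.
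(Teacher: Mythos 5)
Your proof is correct, and its overall skeleton matches the paper's: both split on whether $n = 4d$, both dispose of the generic case via the symmetry ${n \brack 2d}_q = {n \brack n-2d}_q$ applied to the four summands indexed by $0, 2d, n-2d, n$ (with equality forcing $n=3$), and both settle the equality analysis at $n=3$ through Proposition~\ref{P3} and the Ramanujan--Nagell theorem, exactly as in Lemma~\ref{L9}. The genuine difference is the critical case $n = 4d$. The paper argues there via orthogonal complements: for each $X \in \mathcal{U}^{*}$ it asserts $X \cap X^{\perp} = \{0\}$, hence $X^{\perp} \notin \mathcal{U}^{*}$, pairing codewords with non-codewords to get $|\mathcal{U}^{*}| \le \frac{1}{2}|\mathbb{G}_q(4d,2d)|$. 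You instead observe that a $d$-intersecting family with $d \ge 1$ is in particular an intersecting family of $(n/2)$-subspaces with $2\cdot\frac{n}{2} = n$, so the $q$-analog of the Erd\"{o}s--Ko--Rado theorem (Theorem~\ref{T6}) gives $|\mathcal{U}^{*}| \le {n-1 \brack n/2-1}_q = {n \brack n/2}_q/(q^{n/2}+1)$. This buys two things. First, your bound is sharper by a factor of roughly $q^{n/2}$, making the strictness of the inequality immediate. Second, and more importantly, your route is more robust: over a finite field the standard bilinear form admits self-orthogonal subspaces --- e.g. $\langle (1,1,0,0), (0,0,1,1)\rangle$ in $\mathbb{F}_2^4$ equals its own orthogonal complement --- so the paper's claim $X \cap X^{\perp} = \{0\}$ fails in general (and $n=4$, $q=2$, $d=1$ is precisely a case the paper's proof must cover); the pairing argument would need repair, whereas Theorem~\ref{T6} applies verbatim. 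Your separate treatment of the degenerate case $2d = n$, which the paper's four-index count silently absorbs even though two of the four indices then coincide, is also a small but welcome tightening.
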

\begin{proof}
	We begin with the case $n = 4d$ for some $d \in \mathbb{Z}^{+}$. Consider any equidistant linear code $\mathcal{U} \subseteq \mathbb{P}_q(4d)$ with constant distance of $2d$. For every $X \in \mathbb{G}_q(4d, 2d)$ such that $X \in \mathcal{U^{*}}$, there exists $X^{\perp} \in \mathbb{G}_q(4d, 2d)$, the orthogonal complement of $X$ with respect to the standard inner product in $\mathbb{F}_q^{4d}$ such that $X \cap X^{\perp} = \{0\}$, i.e. $X^{\perp} \notin \mathcal{U^{*}}$. Thus, $\mathbb{E}_q(4d) - 1 \le \frac{1}{2} |\mathbb{G}_q(4d, 2d)|$ or in other words, $\mathbb{E}_q(4d) < \frac{1}{2} |\mathbb{P}_q(4d)|$. \\
	
	Next, we consider the case $n \ne 4d$. By definition, $\mathbb{E}_q(n) \le |\mathbb{G}_q(n, 2d)|+1$ for some $1 \le d \le \lfloor \frac{n}{2}\rfloor$. Since $|\mathbb{G}_q(n, 2d)| = {n \brack {2d}}_q = {n \brack {n-2d}}_q = |\mathbb{G}_q(n, n-2d)|$, for all permissible values of $n \ge 3$ we have $|\mathbb{P}_q(n)| = \sum\limits_{k=0}^{n} |\mathbb{G}_q(n, k)| \ge 2(1+|\mathbb{G}_q(n, 2d)|)$. Equality holds only if $2d = n-1$ and $d=1$, i.e. only if $n=3$. The rest follows from Lemma~\ref{L9}. 
\end{proof}

However, the upper bound in Lemma~\ref{L10} gets weaker as either of $n$ and $q$ grows. Table~\ref{tab:2} gives an idea to how fast the ratio shrinks with increasing $n$. In retrospect, Corollary~\ref{C2} suggests that $\frac{\mathbb{E}_q(n)}{|\mathbb{P}_q(n)|}$ is much smaller than $0.5$ as a $d$-intersecting family in $\mathbb{G}_q(n, 2d)$ is likely to be too small compared to the projective space $\mathbb{P}_q(n)$. We conclude this section with an observation about the largest $d$-intersecting families in $\mathbb{G}_2(n, 2d)$.
\begin{proposition}
	\label{P5}
	If $\mathcal{M}_{q, d}$ is the maximum size of a $d$-intersecting family in $\mathbb{G}_q(n, 2d)$ then for all $n \ge 4$,
	\begin{equation*}
	\mathcal{M}_{2, d} < 2^n-1.
	\end{equation*}
\end{proposition}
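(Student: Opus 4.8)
The plan is to transfer the problem from the Grassmannian $\mathbb{G}_2(n,2d)$ to a set system over the punctured space $\mathbb{F}_2^n \backslash \{\mathbf{0}\}$, apply Fisher's inequality (Theorem~\ref{T4}), and then rule out the extremal case by the very same symmetric-design computation that appears in the proof of Theorem~\ref{T8}. Concretely, I would first establish the weak bound $\mathcal{M}_{2,d} \le 2^n - 1$ by reduction to sets, and then upgrade it to a strict inequality for $n \ge 4$ by showing that equality would force a symmetric design whose parameters violate a parity constraint.

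For the reduction, take any $d$-intersecting family $\mathcal{F} \subseteq \mathbb{G}_2(n,2d)$ and associate with it the family of point sets $\mathcal{S} := \{X \backslash \{\mathbf{0}\} : X \in \mathcal{F}\}$ on the ground set $\mathbb{F}_2^n \backslash \{\mathbf{0}\}$, which has $2^n - 1$ elements. Over $\mathbb{F}_2$ every $2d$-subspace contains exactly $2^{2d}-1$ nonzero vectors, and any two members meeting in a $d$-dimensional subspace share exactly $2^d - 1$ nonzero vectors; moreover a subspace is determined by its set of nonzero vectors, so distinct codewords give distinct blocks and $|\mathcal{S}| = |\mathcal{F}|$. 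Hence $\mathcal{S}$ is a uniform $(2^d-1)$-intersecting family (with $2^d - 1 \ge 1$ since $d \ge 1$), and Theorem~\ref{T4} immediately yields $|\mathcal{F}| = |\mathcal{S}| \le 2^n - 1$, i.e. $\mathcal{M}_{2,d} \le 2^n - 1$.

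To obtain strictness, suppose toward a contradiction that $\mathcal{M}_{2,d} = 2^n - 1$, so that the associated $\mathcal{S}$ satisfies $|\mathcal{S}| = 2^n - 1 = |\mathbb{F}_2^n \backslash \{\mathbf{0}\}|$. Then $\mathcal{S}$ is a symmetric design in the sense of Definition~\ref{D2}, namely a $(2^n-1,\, 2^{2d}-1,\, 2^d-1)$ design; by Theorem~\ref{T1} it is $(2^{2d}-1)$-regular, and substituting $v = 2^n - 1$, $k = r = 2^{2d}-1$, and $\lambda = 2^d - 1$ into the relation $r(k-1) = \lambda(v-1)$ of Theorem~\ref{T2} reproduces exactly equation \eqref{E1}, that is $2^{n-d-1} = 2^{2d-1} + 2^{d-1} - 1$. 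Since a nontrivial $d$-intersecting family requires two distinct $2d$-subspaces and hence $n \ge 2d+1$, the left-hand side is even; but for $d \ge 2$ the right-hand side is odd, a contradiction, while for $d = 1$ the equation forces $n = 3$. In either case $n \ge 4$ is impossible, so equality cannot occur and $\mathcal{M}_{2,d} < 2^n - 1$ for all $n \ge 4$.

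I expect no substantial obstacle here: the entire argument is a set-theoretic restatement of the proof of Theorem~\ref{T8} with the trivial codeword $\{0\}$ removed, so the only points needing care are the observation that the pairwise intersection sizes are constant (automatic over $\mathbb{F}_2$) and the verification that $n \ge 2d+1$ in the extremal case, which legitimizes the parity step. Families of size at most one lie trivially below $2^n - 1$ for $n \ge 4$ and require no separate treatment beyond this remark.
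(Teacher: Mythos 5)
Your proof is correct, but it takes a different route from the paper's. The paper disposes of Proposition~\ref{P5} in two lines: given a $d$-intersecting family $\mathcal{F} \subseteq \mathbb{G}_2(n,2d)$ of size $2^n-1$, it invokes Theorem~\ref{T9} to attach $\{0\}$ and build an isometric addition via a Steiner triple system on $2^n-1$ points, producing an equidistant \emph{linear code} of size $2^n$ in $\mathbb{P}_2(n)$, which contradicts Theorem~\ref{T8} since $n \ge 4$. You instead bypass the coding machinery entirely and re-run the combinatorial core of Theorem~\ref{T8}'s proof directly on the family: puncture the subspaces to get a uniform $(2^d-1)$-intersecting set system on $\mathbb{F}_2^n \backslash \{\mathbf{0}\}$, apply Fisher's inequality (Theorem~\ref{T4}) for the weak bound, and in the equality case use Theorems~\ref{T1} and~\ref{T2} to recover equation \eqref{E1}, whose parity analysis forces $n=3$. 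Both arguments are sound; your justification of $n \ge 2d+1$ (two distinct $2d$-subspaces cannot coexist in $\mathbb{F}_2^{2d}$) is valid, and your handling of trivially small families is a reasonable precaution. The trade-off: the paper's reduction is shorter given its established results and reinforces the equivalence between equidistant linear codes and intersecting families that is the theme of Section~\ref{sec:6}, whereas your argument is self-contained at the level of set systems, needs neither the STS construction of Theorem~\ref{T9} nor the notion of linearity, and makes transparent that Proposition~\ref{P5} and the strictness part of Theorem~\ref{T8} are one and the same combinatorial fact --- indeed, your proof shows the logical dependence could be reversed, deriving Theorem~\ref{T8} from the purely combinatorial statement rather than the other way around.
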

\begin{proof}
	Suppose there exists a $d$-intersecting family $\mathcal{F} \subseteq \mathbb{G}_2(n, 2d)$ such that $|\mathcal{F}| = 2^n-1$. An equidistant linear code $\mathcal{U} := \mathcal{F} \cup \{\{0\}\}$ can thus be constructed as per Theorem~\ref{T9} with size $|\mathcal{U}| = 2^n$, contradicting Theorem~\ref{T8}. 
\end{proof}

\section{Conclusions}
\label{sec:7}
We have studied linear codes in projective spaces with the additional constraint that any pair of codewords are at a fixed distance apart. Our findings indicate that an equidistant linear code in $\mathbb{P}_q(n)$ must have an even constant distance and the nontrivial part of a linear code with fixed distance $2d$ is necessarily a subset of the Grassmannian $\mathbb{G}_q(n, 2d)$ with all pairwise intersections of constant dimension $d$. We have further proved that an equidistant linear code with distance $2d$ can be constructed from any $d$-intersecting family in $\mathbb{G}_q(n, 2d)$, thereby establishing that the property of constant distance decouples the translation invariance on the subspace distance metric. An upper bound on the size of such class of linear codes is derived by us which elaborates that the maximum size of an equidistant linear code in $\mathbb{P}_q(n)$ is $2^n$ only when $q$, the order of the underlying field is $2$. We have also identified the nontrivial parts of optimal equidistant linear codes in $\mathbb{P}_2(n)$ to be analogous to the Fano plane (for $n=3$) and sunflowers (for $n \ge 4$). First example of a linear code in $\mathbb{P}_q(n)$ with size more than $2^n$ is presented in this work.

The expression for maximum size of an equidistant linear code in $\mathbb{P}_q(n)$ is, however, unknown for all prime powers $q > 2$ and remains an open problem for future research. In particular, we ask the following.
\begin{problem}
Find the maximum size of a $d$-intersecting family in $\mathbb{G}_q(n, 2d)$ for $1 \le d \le \lfloor \frac{n}{2}\rfloor$, for all $n \ge 4$ and all prime powers $q > 2$; identify the maximal families.
\end{problem}
\begin{problem}
Find the values of $n \ge 5$ and prime powers $q$ for which $\mathbb{E}_q(n) = \mathbb{E}_q(n, 1)$.
\end{problem}
\paragraph{Acknowledgements}
The author would like to thank Prof. Navin Kashyap and Prof. Arvind Ayyer for their valuable inputs.

\bibliographystyle{ieeetr}
\bibliography{Equidistant}

\end{document}